\renewcommand{\textbf}[1]{\begingroup\bfseries\mathversion{bold}#1\endgroup}
\newlength{\bibitemsep}\setlength{\bibitemsep}{.1\baselineskip plus
\newlength{\bibparskip}\setlength{\bibparskip}{0pt}
\let\oldthebibliography\thebibliography \renewcommand\thebibliography[1]{
  \oldthebibliography{#1} \setlength{\parskip}{\bibitemsep}
  \setlength{\itemsep}{\bibparskip} }
\newtheorem{thm}{Theorem}[section]
\newtheorem{defi}{Definition}[section]
\newtheorem{corollary}[thm]{Corollary}
\newtheorem{prop}[thm]{Proposition}
\newtheorem{Conjecture}[thm]{Conjecture}
\newtheorem{lemma}[thm]{Lemma}
\theoremstyle{definition}
\newtheorem{remark}[thm]{Remark}
\newtheorem{example}[thm]{Example}
\definecolor{DarkGreen}{RGB}{51,153,0}
\let\@fnsymbol\@arabic
\newcommand{\R}{\mathbb R}
\newcommand{\Z}{\mathbb Z}
\newcommand{\N}{\mathbb N}
\numberwithin{equation}{section}
\def\XXint#1#2#3{{\setbox0=\hbox{$#1{#2#3}{\int}$}
    \vcenter{\hbox{$#2#3$}}\kern-.5\wd0}}
\date{date}
\begin{document}
\title{On crystallization in the plane for pair potentials \\ with an arbitrary norm}
\author{Laurent B\'{e}termin\footnote{Email address: \texttt{betermin@math.univ-lyon1.fr}} \quad Camille Furlanetto\footnote{Email address: \texttt{camille.furlanetto@etu.univ-lyon1.fr}}\\ \\
Institut Camille Jordan $\&$ Universit\'e Claude Bernard Lyon 1\\ 43 boulevard du 11 Novembre 1918, 69622 Villeurbanne Cedex, France }
\date\today
\maketitle

\begin{abstract}
We investigate two-dimensional crystallization phenomena, i.e., minimality of a lattice's patch for interaction energies, with pair potentials of type $(x,y)\mapsto V(\|x-y\|)$ where $\|\cdot\|$ is an arbitrary norm on $\R^2$ and $V:\R_+^*\to \R$ is a function. For the Heitmann-Radin sticky disk potential $V=V_{\textnormal{HR}}$, we prove, using Brass' key result from [\textit{Computational Geometry}, 6:195--214, 1996], that crystallization occurs for any fixed norm, with a classification of minimizers and minimal energies according to the kissing number associated to $\|\cdot\|$. The minimizer is proved to be, up to affine transform, a patch of the triangular or the square lattice, which shows how to easily get anisotropy in a crystallization phenomenon. We apply this result to the $p$-norms $\|\cdot\|_p$, $p\geq 1$, which allows us to construct an explicit family of norms for which crystallization holds on any given lattice. We also solve part of a crystallization problem studied in [\textit{Arch. Ration. Mech. Anal.}, 240:987--1053] where points are constrained to be on $\Z^2$. Moreover, we numerically investigate the minimization problem for the energy per point among lattices for the Lennard-Jones potential $V=V_{\textnormal{LJ}}:r\mapsto r^{-12}-2r^{-6}$ as well as the Epstein zeta function associated to a $p$-norm $\|\cdot\|_p$, i.e., when $V=V_s:r\mapsto r^{-s}$, $s>2$. Our simulations show a new and unexpected phase transition for the minimizers with respect to $p$.
\end{abstract}

\noindent
\textbf{AMS Classification:}  Primary 74G65 ; Secondary 82B20, 52C15.\\
\textbf{Keywords:} Crystallization; Norm; Lennard-Jones potential; Sticky disk potential; Lattices; Minimal energy, Epstein zeta functions; Anisotropy.

\medskip

\section{Introduction and setting}

Crystallization phenomena, i.e., the emergence of periodic crystal structures from physical forces, are at the same time very common in nature or experiments and known to be very difficult to justify mathematically \cite{RadinLowT,BlancLewin-2015,Canizo24}. What brings together particles in order to form periodic arrays -- like the ones composing the chair where probably sit the reader right now -- is believed to be a combination of potential and kinetic energies, atomic orbitals as well as screening or pressure effects. However, excepted in dimension 1 and 2 \cite{BlancLebris,Betermin:2014fy} and for very particular systems, no mathematical result incorporating all these parameters (e.g., quantum models) is available.

\medskip

Based on the number of degrees of freedom of such systems, it is indeed more reasonable to start with very simple models. Considering $N$ point particles $X_N=(x_1,...,x_N)\in (\R^n)^N$ interacting in pairs via a radial potential $V:\R_+\to \R$ is the most tractable model one can imagine. It basically corresponds to a zero temperature system with identical particle species. The total energy of such system is then usually given by
\begin{equation}\label{eq:E}
\textnormal{E}(X_N)=\frac{1}{2}\sum_{i=1 \atop i \neq j}^N\sum_{j=1}^N V(\|x_i-x_j\|_2),
\end{equation}
and it is then a key problem to show either that a minimizer of $\textnormal{E}$ is always a patch of a periodic structure (i.e., finite crystallization) or that the average energy per point for a minimizer converges, as $N\to +\infty$, to the energy per point of a lattice (i.e., crystallization in the sense of the thermodynamic limit).

\medskip

Whereas the problem is quite well-understood in dimension 1 \cite{VN1,VN2,Rad1}, it remains widely open in higher dimension. Crystallization results in the plane for energies of type $\textnormal{E}$ have all been shown for attractive-repulsive potentials. The Heitmann-Radin sticky disk potential $V_{\textnormal{HR}}$ defined by
\begin{equation}\label{eq:VHR}
\forall r\geq 0,\quad V_{\textnormal{HR}}(r)=\left\{\begin{array}{ll}
+\infty & \mbox{if $r\in[0,1)$}\\
-1 & \mbox{if $r=1$}\\
0 & \mbox{if $r>1$}
\end{array}
\right.
\end{equation}
is the prototypical example of such simple pair potential for which Heitmann and Radin have proved \cite{Rad2} (see also \cite{DelucaFriesecke-2018} for an alternative proof), mainly based on a result of Harborth \cite{harborth}, the minimality of a patch of the so-called triangular lattice of unit side-length (also called hexagonal or Abrikosov lattice \cite{Sandier_Serfaty}) given by
$$
\mathsf{A}_2:=\Z\left(1,0\right)\oplus \Z\left(\frac{1}{2},\frac{\sqrt{3}}{2} \right).
$$
The optimality of the triangular lattice has been also proved for other short-range \cite{Rad3,AuyeungFrieseckeSchmidt-2012,NinLucaSoft} and long-range \cite{Crystal} potentials. Very roughly, the proofs are always based on the fact that a regular hexagon is optimal locally (i.e., the maximal number of nearest-neighbours of a point is $6$) because of the large enough repulsion at short distance and that the tail of the potential is flat enough to almost not interfere in the perfect assembly of the hexagons.

\medskip

Recently, the same type of work -- with basically the same strategy -- has been done by the first author, De Luca and Petrache \cite{BDLPSquare}, for the soft potential $V_{\textnormal{BPD}}$ defined by
$$
\forall r\geq 0,\quad V_{\textnormal{BPD}}(r)=\left\{\begin{array}{ll}
+\infty & \mbox{if $r\in[0,1)$}\\
-1 & \mbox{if $r\in [1,\sqrt{2}]$}\\
0 & \mbox{if $r>\sqrt{2}$,}
\end{array}
\right.
$$
as well as some of its perturbations, for which the square lattice $\Z^2$ has been shown to be minimal, but only in the sense of the thermodynamic limit, for $\textnormal{E}$. No finite crystallization result of a patch of $\Z^2$ for $V_{\textnormal{BPD}}$ has been shown.

\medskip

The highly symmetric lattices $\mathsf{A}_2$ and $\Z^2$ are the too canonical good candidates for a crystallization problem \cite{Beterloc}, but other anisotropic structures obviously also appear as natural structures in the literature, see e.g., \cite{Lamy15,Ciftja22,Sheargridcells}. By a straightforward geometrical consideration, it is easy to show any crystallization on a given two-dimensional lattice $L$, by simply composing $V_{\textnormal{HR}}$ with an appropriate affine transform mapping $L$ to $\mathsf{A}_2$. The anisotropy obtained from such trick comes simply from the anisotropy of $L$ itself and seems then to be a little bit artificial. Notice also that algorithmic methods \cite{Torquato09} have been designed to construct interacting potential $V$ such that a given lattice is minimal.

\medskip

In this paper, in order to show the emergence of anisotropy in crystallization (i.e., the minimality of less symmetric lattices), we propose to investigate the minimizers of the following energy
\begin{equation}\label{eq:Enorm}
\textnormal{E}_{\|\cdot\|}(X_N)=\frac{1}{2}\sum_{i=1 \atop i \neq j}^N\sum_{j=1}^N V(\|x_i-x_j\|),
\end{equation}
where $\|\cdot\|$ is an arbitrary norm on $\R^2$, 
\begin{itemize}
\item[•] for the \textbf{Heitmann-Radin sticky disk potential} $V=V_{\textnormal{HR}}$ in the finite crystallization case. Using Brass' results \cite{Brass} on combinatorial geometry, generalizing Harborth's work \cite{harborth} to an arbitrary norm, as well as the recent crystallization result by Del Nin and De Luca in the square sticky disk case \cite{NinLuca25}, we completely classify the minimizers of $\textnormal{E}_{\|\cdot\|}$ according to the maximal number of nearest-neighbors (i.e., the kissing number), which is, up to an affine transform, $\mathsf{A}_2$ or $\Z^2$. We explicitly construct an infinite family of norms $\{N_{p,L}\}_{p\in [1,\infty]}$ in such a way that a given lattice $L$ is minimal for $\textnormal{E}_{N_{p,L}}$ and we investigate the minimizers for the $p$-norms. Furthermore, we give the minimal energy of $\textnormal{E}$ for $V_{\textnormal{BPD}}$ when the points are assumed to be already on $\Z^2$.
\item[•] for the \textbf{Lennard-Jones potential} $V_{\textnormal{LJ}}:\R_+^*\to \R$, $r\mapsto V_{\textnormal{LJ}}(r)=\displaystyle\frac{1}{r^{12}}-\frac{2}{r^6}$, among lattices (i.e., when the structure is already constrained to be periodic). After proving a result allowing us to consider only an associated (reduced) minimization problem among unit density lattices, we numerically investigate the minimizer of $\textnormal{E}_{\|\cdot\|_p}$ for different $p$-norms, $p\in [1,\infty]$. Furthermore, we also numerically study the same type of minimization problem among unit density lattices for the purely repulsive inverse power-law $V_s(r)=\frac{1}{r^s}$, $s>2$ -- i.e., the \textbf{Epstein zeta function} --, which is related to the Lennard-Jones one. Our numerical results show in both cases a new phase transition for the minimizer. Except for $p\in \{1,2,\infty\}$, the minimizer for $V_{\textnormal{LJ}}$ does not seem to be the same as the one for $V_{\textnormal{HR}}$. This shows a new phenomenon different from the one proven in the euclidean case in \cite{Rad2,LBComputerLJ23} for which a triangular lattice is optimal in both cases.
\end{itemize}

\noindent \textbf{Plan of the paper.} In Section \ref{sec:lattices}, we give some definitions and results concerning norms and lattices. Section \ref{sec:main} is devoted to the sticky disk potential with an arbitrary norm as well as $V_{\textnormal{BPL}}$ with points on $\Z^2$. Finally in Section \ref{sec:LJ}, we numerically and theoretically investigate the Lennard-Jones problem among lattices, as well as the Epstein zeta functions, with an arbitrary norm.

\section{Preliminaries about norms and lattices}\label{sec:lattices}

A key point in Brass' work is to classify norms with respect to their kissing number (see \cite[Section 2]{Swanepoel2018}), whose definition we recall here.

\begin{defi}[\textbf{Minimal-distance graph and kissing number for a norm}]
Let $\|\cdot\|$ be a norm on $\R^2$. For all configuration $X_N\subset \R^2$ such that $d(X_N):=\min_{i\neq j} \|x_i-x_j\|=1$, we define the minimal-distance graph $G_N=(X_N,E_N)$ where $X_N$ is its set of points and $E_N=\{\{x,y\}\subset X_N : \|x-y\|=1\}$ is its set of edges with unit length. The Hadwiger number, also called kissing number, associated to $\|\cdot\|$ is defined by
$$
K_{\|\cdot\|}:= \max_{X_N\subset (\R^2)^N \atop d(X_N)=1}\max_{x\in X_N} \deg(x)
$$
where $\deg(x)=\sharp\{y\in X_N : \{x,y\}\in E_N\}$ is the degree of $x$.
\end{defi}
\begin{example}
For instance, we know that $K_{\|\cdot\|_2}=6$ is the classical kissing number for euclidean balls in $\R^2$.
\end{example}

We have the following result that classifies norms with respect to their kissing number.

\begin{lemma}[\textbf{Classification of norms by their kissing number}, \cite{Grunbaum61}] \label{lem:kissing}
Let $\|\cdot\|$ be a norm on $\R^2$ and $S_{\|\cdot\|}=\{x\in \R^2 : \|x\|=1\}$ its unit sphere centered at the origin. We have: 
\begin{enumerate}
\item If $S_{\|\cdot\|}$ is a parallelogram, then $K_{\|\cdot\|}=8$,
\item If $S_{\|\cdot\|}$ is not a parallelogram, then $K_{\|\cdot\|}=6$.
\end{enumerate}
\end{lemma}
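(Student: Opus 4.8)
The plan is to establish, for any norm $\|\cdot\|$ with unit ball $B=\{x:\|x\|\le 1\}$, the universal bounds $6\le K_{\|\cdot\|}\le 8$, to realise $K_{\|\cdot\|}=8$ exactly when $B$ is a parallelogram and $K_{\|\cdot\|}=6$ otherwise; the crucial reduction for the second case is that the existence of seven unit-distance neighbours of a point already forces $B$ to be a parallelogram. First I would prove $K_{\|\cdot\|}\le 8$ by a packing argument: if a point $x$ of some minimal-distance graph has degree $d$, translate so that $x=0$ and let $y_1,\dots,y_d$ be its neighbours, so that $\|y_i\|=1$ and $\|y_i-y_j\|\ge 1$ for $i\ne j$; then the $d+1$ homothets $\tfrac12 B,\ \tfrac12 B+y_1,\dots,\tfrac12 B+y_d$ have pairwise disjoint interiors (an overlap of the $i$-th and $j$-th one would yield $y_i-y_j\in\operatorname{int}(B)$, i.e. $\|y_i-y_j\|<1$) and all lie in $\tfrac32 B$ (since $\|x+y_i\|\le\tfrac32$ for $x\in\tfrac12 B$), so comparing areas $(d+1)\tfrac14|B|\le\tfrac94|B|$ and $d\le 8$.

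For the parallelogram case I would use that $K_{\|\cdot\|}$ depends only on the linear class of $B$: an invertible linear map $T$ carries $\|\cdot\|$ to the norm with unit ball $T(B)$ and induces a degree-preserving isomorphism of minimal-distance graphs. Hence if $B$ is a parallelogram we may assume $B=[-1,1]^2$, and then the origin together with $(\pm1,0)$, $(0,\pm1)$, $(\pm1,\pm1)$ forms a configuration of $\ell^\infty$-minimal distance $1$ in which the origin has degree $8$; combined with the previous step this gives $K_{\|\cdot\|_\infty}=8$, proving the first assertion.

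For the universal lower bound $K_{\|\cdot\|}\ge 6$ I would inscribe an affine-regular hexagon in $\partial B$. Fix $p\in\partial B$; since $q\mapsto\|p-q\|$ is continuous on $\partial B$ and equals $0$ at $q=p$ and $2$ at $q=-p$, the intermediate value theorem gives $q\in\partial B$ with $\|p-q\|=1$, and then $p,q$ are linearly independent. Setting $v_1=p$, $v_2=q$ and $v_{i+2}=v_{i+1}-v_i$ for all $i$ (indices mod $6$) yields the six points $p,q,q-p,-p,-q,p-q$ of $\partial B$, with $v_{i+3}=-v_i$; their consecutive distances $\|v_{i+1}-v_i\|=\|v_{i+2}\|=1$, their short diagonals $\|v_i-v_{i+2}\|=\|2v_i-v_{i+1}\|\ge\|2v_i\|-\|v_{i+1}\|=1$, and their long diagonals $\|v_i-v_{i+3}\|=2$ are all $\ge 1$, so the origin has degree $6$ in the minimal-distance graph of $\{0,v_1,\dots,v_6\}$.

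It remains to upgrade $K_{\|\cdot\|}\le 8$ to $K_{\|\cdot\|}\le 6$ when $B$ is not a parallelogram; equivalently, to show that if $p_1,\dots,p_7\in\partial B$ satisfy $\|p_i-p_j\|\ge 1$ for all $i\ne j$, then $B$ is a parallelogram. This is where I expect the main difficulty to lie, because the Euclidean argument — a chord of the unit circle of length at least the radius subtends a central angle of at least $60^\circ$, so at most six neighbours fit — breaks down for a general norm (for $\|\cdot\|_\infty$ the unit-distance pair $(1,0),(1,1)$ subtends only $45^\circ$, and eight fit). Here I would follow the combinatorial-geometric analysis of \cite{Grunbaum61} (see also \cite{Swanepoel2018}): ordering $p_1,\dots,p_7$ cyclically and examining the homothets $\tfrac12 B+p_i$ inside the annular region $\tfrac32 B\setminus\operatorname{int}(\tfrac12 B)$ together with the arcs of $\partial B$ they subtend, a case distinction on the Euclidean angular gaps between consecutive points shows that such a gap can be small only where $\partial B$ is locally a segment, and that seven such points coexist only when they lie among the vertices and edge-midpoints of a parallelogram, forcing $B$ to be that parallelogram; the sub-case of eight neighbours is more rigid, since then the area inequality of the first step is an equality, the nine homothets tile $\tfrac32 B$, and only a parallelogram $B$ admits such a tiling. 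Granting this, for $B$ not a parallelogram the third step together with the contrapositive of that implication give $6\le K_{\|\cdot\|}\le 6$, proving the second assertion.
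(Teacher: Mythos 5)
The paper does not actually prove this lemma: it is imported verbatim from Gr\"unbaum \cite{Grunbaum61} (see also \cite{Swanepoel2018}), so there is no internal proof to compare yours against. That said, your sketch is a reasonable reconstruction of how such a proof goes, and the elementary parts are correct and complete: the packing bound $K_{\|\cdot\|}\le 8$ (disjointness of the homothets $\tfrac12 B+y_i$ inside $\tfrac32 B$ and comparison of areas), the linear invariance of $K_{\|\cdot\|}$ reducing the parallelogram case to $\|\cdot\|_\infty$ where degree $8$ is exhibited explicitly, and the lower bound $K_{\|\cdot\|}\ge 6$ via the inscribed affine-regular hexagon $v_{i+2}=v_{i+1}-v_i$, whose diagonal estimates $\|2v_i-v_{i+1}\|\ge 1$ you verify correctly.

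The one place where your argument is not self-contained is exactly the crux of the lemma: showing that seven pairwise $\ge 1$-separated points on $\partial B$ force $B$ to be a parallelogram. Your paragraph on this step describes the flavour of Gr\"unbaum's case analysis (angular gaps can be small only along flat arcs of $\partial B$; equality in the packing bound would force a translational tiling of $\tfrac32 B$) but does not carry it out, and the assertion that only a parallelogram admits the nine-homothet tiling is itself nontrivial and left unproved. Since the paper treats the entire lemma as a citation, deferring this step to \cite{Grunbaum61} is consistent with the paper's own level of rigour; just be aware that, as written, your proof of part 2 rests on the reference rather than on your sketch. If you wanted a self-contained argument, the cleanest known route is the one in \cite{Swanepoel2018}, which works with the arcs of $\partial B$ cut out by the neighbours and the characterization of segments of length $1$ contained in $\partial B$ (equivalently, the failure of strict convexity noted in the paper's Remark after Lemma \ref{lem:kissing}).
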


Therefore, we can define the following sets of norms with fixed kissing number:
$$
\forall k\in \{6,8\},\quad \mathcal{N}_k=\{\|\cdot\| \textnormal{ norm on $\R^2$} : K_{\|\cdot\|}=k\}.
$$

\begin{remark}[\textbf{Shape of the unit sphere and convexity}]
Notice that $S_{\|\cdot\|}$ is not a parallelogram if and only if $S_{\|\cdot\|}$ is strictly convex (see e.g., Brass \cite[Lemma 1]{Brass}).
\end{remark}

\begin{example}[\textbf{$p$-norms}]
We notice that $\|\cdot\|_1$ and $\|\cdot\|_\infty$ belongs to $\mathcal{N}_8$ -- since their unit spheres are squares -- whereas $\|\cdot\|_p$, $p\not\in \{1,\infty\}$, belongs to $\mathcal{N}_6$ -- since their unit spheres are strictly convex -- where the $p$-norms are defined, for all $x=(x_1,x_2)\in \R^2$,
$$
\forall p\geq 1, \|x\|_p=\left( |x_1|^p+|x_2|^p\right)^{\frac{1}{p}},\quad \|x\|_\infty=\max(|x_1|,|x_2|).
$$
\end{example}

Furthermore, we need to recall the notion of lattice since that will be a central object in our work.

\begin{defi}[\textbf{Lattice spaces}]
We say that $L\subset \R^2$ is a two-dimensional lattice if there exists a basis $(u,v)$ of $\R^2$ such that $L=\Z u \oplus \Z v$. We call $\mathcal{L}_2$ the set of two-dimensional lattices and $\mathcal{L}_2(1)\subset \mathcal{L}_2$ the set of unit density lattice, i.e., $L\in \mathcal{L}_2$ satisfying $|\det(u,v)|=1$.
\end{defi}
\begin{remark}[\textbf{Lattice reduction}]\label{rem:D}
Two important results coming from Geometry of Numbers will help us to simplify our methods:
\begin{enumerate}
\item Any $L\in \mathcal{L}_2$ can be written, up to isometry, in the reduced following form (see Engel \cite{Engel}):
\begin{equation}\label{eq:Engel}
L=\Z(u_1,0)\oplus \Z(v_1,v_2),\quad (u_1,v_1,v_2)\in \R_+^*\times \R_+\times \R_+^*, (\widehat{u,v})\in \left[\frac{\pi}{3},\frac{\pi}{2}  \right], \|(u_1,0)\|_2\leq \|(v_1,v_2)\|_2.
\end{equation}
\item Any $L\in \mathcal{L}_2(1)$ can be represented as a point $(x,y)\in \R^2$ belonging to the half-fundamental domain $\mathcal{D}$ defined by
\begin{equation} \label{eq:D}
\mathcal{D}:=\left\{(x,y)\in \R^2 : 0\leq x \leq \frac{1}{2}, y>0, x^2+y^2\geq 1\right\},
\end{equation}
in such a way that, up to rotation, we can write $L$ as in \eqref{eq:Engel},
\begin{equation}\label{eq:Lxy}
L=\Z\left( \frac{1}{\sqrt{y}},0\right)\oplus \Z\left( \frac{x}{\sqrt{y}}.\sqrt{y}\right).
\end{equation}
Notice that the square lattice $\Z^2$ and the triangular lattice $\sqrt{\frac{2}{\sqrt{3}}}\mathsf{A}_2$ of unit density are respectively parametrized in $\mathcal{D}$ by $(0,1)$ and $\left(\frac{1}{2},\frac{\sqrt{3}}{2}\right)$. \\
Moreover, is usual (see e.g., \cite{Mont}), by symmetry of the energies we consider, to use the above half-fundamental domain instead of the full one (see e.g., \cite[Appendix]{BFMaxTheta20}), where $x\in \left[ -\frac{1}{2},\frac{1}{2} \right]$.
\end{enumerate}
\end{remark}

We end this section with a very simple and useful result about linear transforms mapping a lattice to another in their reduced forms.

\begin{lemma}[\textbf{Linear map transforming a lattice into another}]\label{lem:linear}
Let $L',L\in \mathcal{L}_2$ written in their reduced form as in \eqref{eq:Engel}, i.e., $L'=\Z(u_1',0)\oplus \Z(v_1',v_2')$ and $L=\Z(u_1,0)\oplus \Z(v_1,v_2)$. There exists a unique linear transform $T_{L',L}:\R^2\to \R^2$ such that $T_{L',L}(L')=L$ which is defined by
$$
\forall (x,y)\in \R^2,\quad T_{L',L}(x,y)=\left(\frac{u_1}{u_1'}x+\frac{1}{v_2'}\left(v_1-\frac{u_1 v_1'}{u_1'} \right)y,\frac{v_2}{v_2'}y \right).
$$
\end{lemma}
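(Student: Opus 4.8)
The plan is to construct the linear map explicitly by sending the reduced basis of $L'$ to the reduced basis of $L$, and then to argue uniqueness from the fact that a linear map on $\R^2$ is determined by its values on any basis. Concretely, since $(u_1',0)$ and $(v_1',v_2')$ form a basis of $\R^2$ (because $u_1' v_2' \neq 0$ by \eqref{eq:Engel}), there is exactly one linear $T_{L',L}$ with $T_{L',L}(u_1',0) = (u_1,0)$ and $T_{L',L}(v_1',v_2') = (v_1,v_2)$. This $T_{L',L}$ automatically satisfies $T_{L',L}(L') = T_{L',L}(\Z(u_1',0)\oplus\Z(v_1',v_2')) = \Z(u_1,0)\oplus\Z(v_1,v_2) = L$ by linearity, since the image of a $\Z$-span of a basis is the $\Z$-span of the images.

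The second step is to verify that the formula in the statement is indeed this map, i.e. to check the two defining equations. Evaluating the given expression at $(x,y) = (u_1',0)$ gives first coordinate $\frac{u_1}{u_1'}u_1' + 0 = u_1$ and second coordinate $0$, as required. Evaluating at $(x,y) = (v_1',v_2')$ gives second coordinate $\frac{v_2}{v_2'}v_2' = v_2$ and first coordinate $\frac{u_1}{u_1'}v_1' + \frac{1}{v_2'}\bigl(v_1 - \frac{u_1 v_1'}{u_1'}\bigr)v_2' = \frac{u_1 v_1'}{u_1'} + v_1 - \frac{u_1 v_1'}{u_1'} = v_1$, as required. Linearity of the displayed formula in $(x,y)$ is immediate since each coordinate is a linear combination of $x$ and $y$. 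Hence the displayed map coincides with $T_{L',L}$.

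For uniqueness among \emph{all} linear maps sending $L'$ onto $L$: suppose $S$ is linear with $S(L') = L$. Then $S(u_1',0)$ and $S(v_1',v_2')$ lie in $L$, but a priori they need not be the chosen basis vectors of $L$ — they could be another $\Z$-basis. This is the one genuinely delicate point, so I would be careful here: the claim as literally stated (``unique'') is only correct if we additionally require that $T_{L',L}$ maps the reduced basis of $L'$ to the reduced basis of $L$, since e.g. $-\mathrm{Id}$ maps any lattice to itself. I expect the intended reading is ``the unique linear map sending the reduced basis of $L'$ to the reduced basis of $L$,'' under which uniqueness is the trivial linear-algebra fact above. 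I would therefore phrase the proof as: define $T_{L',L}$ by the basis-to-basis prescription, note it is the unique linear map with that property, check by direct substitution that it is given by the displayed formula, and observe $T_{L',L}(L') = L$ follows at once; if one wants the map to also be orientation-preserving or to be canonical, the reducedness of both bases (the angle and length normalizations in \eqref{eq:Engel}) is what pins it down.

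The main obstacle is thus not computational — the verification is a two-line substitution — but rather the precise formulation of uniqueness. I would resolve it by stating clearly that uniqueness is meant relative to the fixed reduced bases, and then the result is immediate; no deeper input (such as the structure of $\mathrm{GL}_2(\Z)$) is needed, although one could remark that two reduced forms of the same lattice differ by an element of a finite symmetry group, which is why fixing the reduced representatives makes the map well-defined.
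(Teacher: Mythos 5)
Your proof is correct and follows essentially the same route as the paper: the paper likewise pins down $T_{L',L}$ by imposing $T_{L',L}(u_1',0)=(u_1,0)$ and $T_{L',L}(v_1',v_2')=(v_1,v_2)$ and solving for the coefficients, so uniqueness is obtained exactly as the unique linear map sending the reduced basis of $L'$ to that of $L$. Your caveat that "unique linear transform with $T(L')=L$" is not literally true without fixing the basis correspondence (e.g.\ $-\mathrm{Id}$ preserves any lattice) is a fair observation that applies equally to the paper's own proof, which silently adopts the basis-to-basis reading.
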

\begin{proof}
Writing, for $(x,y)\in \R^2$, $T_{L',L}(x,y)=(ax+by,cx+dy)$ with $(a,b,c,d)\in \R^4$ and solving the two systems coming from the conditions $T_{L',L}(u_1',0)=(u_1,0)$ and $T_{L',L}(v_1',v_2')=(v_1,v_2)$ leads to the unique solution given in the statement.
\end{proof}

\section{Sticky disk potential and consequences}\label{sec:main}

The goal of this section is to show how Brass' results \cite{Brass} on configuration with maximal shortest distances gives a complete solution of the crystallization problem for $\textnormal{E}_{\|\cdot\|}$ defined by \eqref{eq:Enorm} where $\|\cdot\|$ is a arbitrary norm and $V=V_{\textnormal{HR}}$ is the Heitmann-Radin sticky disk potential defined by \eqref{eq:VHR}. Notice that the problem is invariant by translation. We write, for all $N\in \N\backslash \{0,1\}$,
$$
\mathcal{E}_{\|\cdot\|}(N)=\min_{X_N\subset \R^2} \textnormal{E}_{\|\cdot\|}(X_N).
$$

\subsection{Two key constructions}

We start to describe two key constructions of configurations with $N$ points on $\mathsf{A}_2$ and $\Z^2$ appearing respectively in \cite{Rad2} and \cite[Section 5]{Brass}. These will be our canonical minimizers.

\medskip

Let $N\in \N\backslash \{0,1\}$, then we define $H_N$ and $Z_N$ as follows:

\medskip

\noindent • \textbf{$H_N$ is an incomplete regular fully filled hexagon with $N$ points on $\mathsf{A}_2$} in the following sense. Let $s$ be the maximal integer such that $N\in \llbracket h_s,h_{s+1}-1\rrbracket $ where
$$
h_s:=3s^2+3s+1,
$$ 
then there exists a unique pair $(k,j)\in\llbracket 0,5\rrbracket\times\llbracket 0,s\rrbracket$ such that 
$$
N=3s^2+3s+1+(s+1)k+j,\quad s=\max\{\ell \in \N : N\in \llbracket h_\ell,h_{\ell+1}-1\rrbracket\},\quad k\in \llbracket 0,5\rrbracket,\quad j\in \llbracket 0,s\rrbracket.
$$ 
For a geometrical understanding of this decomposition, notice that $h_s$ is the $(s+1)$th centered hexagonal number, i.e., the number of points on $\mathsf{A}_2$ filling a perfect full regular hexagon $F_s$ with $s+1$ points on each 6 sides (see e.g., the set of blue dots in Figure \ref{fig:HN}). Therefore, $N-h_s=(s+1)k+j$ is the number of points filling the next layer of the ``next" regular hexagon where $k$ sides $B_{s,k}$ with $s+1$ points each are filled (in the direct sense or rotation) and $j$ is the number of the remaining points $R_{s,k,j}$. Therefore, such set of points can be written using complex numbers, defining $x_{m,n,s}^p:=e^{\frac{i p \pi}{3}}\left(m+ne^{\frac{i\pi}{3}} \right)$, as follows:
$$
H_N=F_s\cup B_{s,k} \cup R_{s,k,j},
$$
where
\begin{align*}
&F_s=\{x_{m,n,s}^p : (m,n,p)\in \N\times \N \times \llbracket 0,5 \rrbracket, m+n\leq s\},\\
&B_{s,k}=\{ x_{m,n,s}^r : (m,n,r)\in \N\times \N^*\times \llbracket 0,k-1 \rrbracket , m+n=s+1\},\\
&R_{s,k,j}=\{x_{m,n,s}^k : (m,n)\in \N \times \llbracket 1,j \rrbracket, m+n=s+1  \}.
\end{align*}
An illustration of this construction can be found on Figure \ref{fig:HN}. It has been shown in \cite{Rad2} that, for all $N\in \N\backslash \{0,1\}$, $H_N$ has exactly $\lfloor 3N-\sqrt{12N-3} \rfloor$ pairs of points with unit euclidean distances, which means that
$$
E_{\|\cdot\|_2}(H_N)=-\lfloor 3N-\sqrt{12N-3} \rfloor.
$$

\begin{figure}[!h]
\centering
 \includegraphics[width=6cm]{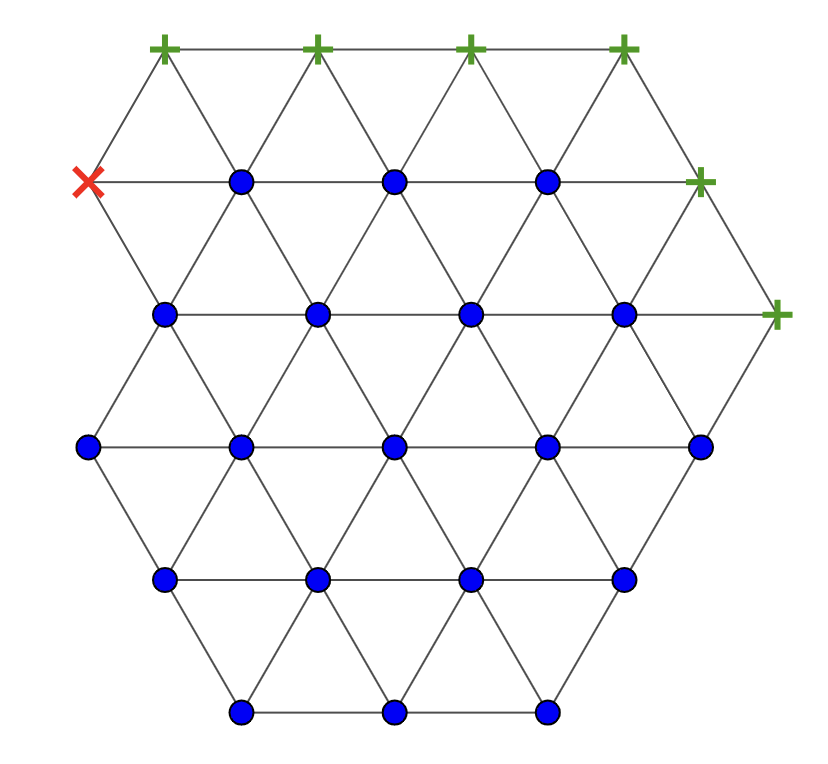}
\caption{Construction of the configuration $H_N$ for $N=26$. In this case, $(s,k,j)=(2,2,1)$ and we verify that the number of unit distances is indeed $\lfloor 3\times 26-\sqrt{12\times 26-3} \rfloor=60$. Blue dots \textcolor{blue}{\textbf{$\bullet$}} correspond to $F_{2}$, green crosses \textcolor{DarkGreen}{\textbf{+}} to $B_{2,2}$ and the red cross \textcolor{red}{\textbf{$\times$}} to $R_{2,2,1}$.}
\label{fig:HN}
\end{figure}

\medskip

\noindent • \textbf{$Z_N$ is an incomplete full filled octagon with $N$ points on $\Z^2$}. Contrary to the construction of $H_N$ given above which is rather easy to explain and for which we are able to give the exact coordinates of each point with a simple formula, the case of $Z_N$ is way more complicated. Therefore, we have chosen to give a similar explanation to the one given by Brass in \cite[p. 208-209]{Brass} with a bit more detail. First, let $N\in \N\backslash \{0,1\}$ and $s$ be the maximal integer such that $N\in \llbracket z_s,z_{s+1}-1\rrbracket $ where
$$
z_s:=7s^2+4s+1,
$$
then we call $i\in \llbracket 0, z_{s+1}-z_s-1 \rrbracket$ the integer such that 
$$
N=7s^2+4s+1+i.
$$ 
Notice that $z_s$ is the number of points on $\Z^2$ filling a perfect full regular (for $\|\cdot\|_\infty$) octagon $\tilde{F}_s$ with $s+1$ points on each of its $8$ sides. Let us call $\{A,B,C,D,E,F,G,H\}$ the sides of $\tilde{F}_s$, starting with the leftmost side and in clockwise direction (see Figure \ref{fig:ZN}). Thus, the $i$ next points (if $i\neq 0$) are added to the sides in such a way that a maximum of unit distances is created by each new point: i.e., in the order $H,A,B,C,D,B,E,F,D,C,D,E,F,G$ by always filling up a side until no further point of degree 4 can be added, before starting a new side (i.e., a new layer) by adding a point of degree 3. In Figure \ref{fig:ZN}, we have depicted how add points one by one (the points are labeled exactly as $N$) to $\tilde{F}_1$ and $\tilde{F}_2$ in order to fill entirely $\tilde{F}_3$, giving in one picture all the configurations $Z_N$ for $N\in \{2,...,z_3=76\}$.

\begin{figure}[!h]
\centering
 \includegraphics[width=10cm]{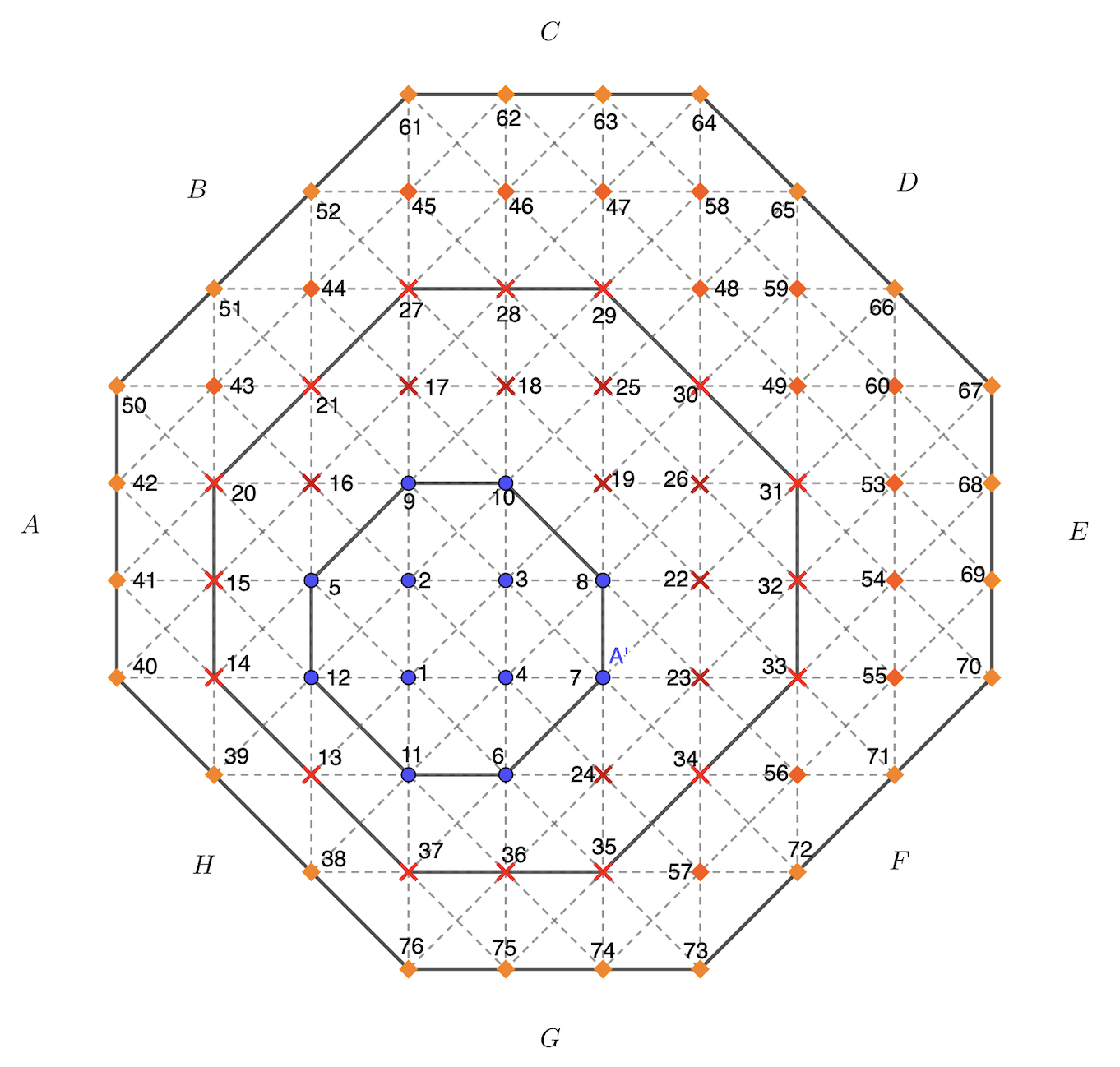}
\caption{Construction of the configuration $Z_N$ for $N\in \{2,...,z_3=76\}$. Solid lines materialize the boundary of the fully filled octagons $\tilde{F}_i$, $i\in \{2,3,4\}$ whereas dashed lines correspond to the other edges of the minimal-distance graph associated to $Z_N$ (for $\|\cdot\|_\infty$). The location of each new added points, to pass from $Z_N$ to $Z_{N+1}$, is indicated by the labels (i.e., the numbers correspond to the values of $N$). $\{A,B,C,D,E,F,G,H\}$ are the names of the sides (up to translation). Blue dots \textcolor{blue}{\textbf{$\bullet$}} corresponds to $\tilde{F}_2$, red crosses  \textcolor{red}{\textbf{$\times$}} are the points added to $\tilde{F}_2$ in order to fill $\tilde{F}_3$ and orange diamonds \textcolor{orange}{$\blacklozenge$} are the points added to $\tilde{F}_3$ in order to fill $\tilde{F}_4$.}
\label{fig:ZN}
\end{figure}

It has been shown in \cite{Brass} that, for all $N\in \N\backslash \{0,1\}$, $Z_N$ has exactly $\lfloor4N-\sqrt{28N-12} \rfloor$ pairs of points with unit distances for the infinite norm, which means that
$$
E_{\|\cdot\|_\infty}(Z_N)=-\lfloor4N-\sqrt{28N-12} \rfloor.
$$

\begin{remark}
We chose not to give the arguments explaining how to get the total number of edges for the minimal-distance graph of $H_N$ and $Z_N$ to keep our paper as short as possible. For the first one, details can be found in \cite{Rad2,FurlanettoTIPE}. For the second one, some details that we have carefully checked are given in \cite{Brass}.
\end{remark}

\subsection{Crystallization for Heitmann-Radin potential with arbitrary norm}

We can now state the following crystallization result for $\textnormal{E}_{\|\cdot\|}$ ensuring the existence of a lattice's patch as a minimizer and giving the precise value for $\mathcal{E}_{\|\cdot\|}(N)$ with respect to the norm's kissing number. 

\begin{thm}[\textbf{Crystallization for $E_{\|\cdot\|}$}]\label{thm:main}
Let $N\in \N\backslash \{0,1\}$. Let $\|\cdot\|$ be a norm on $\R^2$ and $\textnormal{E}_{\|\cdot\|}$ defined in \eqref{eq:Enorm} with $V=V_{\textnormal{HR}}$ the Heitmann-Radin sticky disk potential defined by \eqref{eq:VHR}. We have:
\begin{enumerate}
\item if $\|\cdot\|\in \mathcal{N}_6$, then 
$$
\mathcal{E}_{\|\cdot\|}(N)=-\lfloor 3N-\sqrt{12N-3} \rfloor ,
$$
achieved only for $X_N$ being a patch of a lattice $L_{\|\cdot\|}$ (up to isometry), in particular for $X_N=T_{\|\cdot\|}(H_N)$ where $T_{\|\cdot\|}\in \mathcal{L}(\R^2,\R^2)$ maps $\mathsf{A}_2$ to  $L_{\|\cdot\|}$.
\item if $\|\cdot\|\in \mathcal{N}_8$, then 
$$
\mathcal{E}_{\|\cdot\|}(N)=-\lfloor4N-\sqrt{28N-12} \rfloor,
$$
achieved only for $X_N$ being a patch of a lattice $L_{\|\cdot\|}$ (up to isometry), in particular for $X_N=T_{\|\cdot\|}(Z_N)$ where $T_{\|\cdot\|}\in \mathcal{L}(\R^2,\R^2)$. maps $\Z^2$ to $L_{\|\cdot\|}$.
\end{enumerate}
Furthermore, $L_{\|\cdot\|}$ is an asymptotic minimizer of $\textnormal{E}_{\|\cdot\|}$ in the sense of the thermodynamic limit, i.e.,
$$
\forall k\in \{6,8\},\quad \forall \|\cdot\|\in \mathcal{N}_k,\quad \lim_{N\to +\infty} \frac{\displaystyle\mathcal{E}_{\|\cdot\|}(N)}{N}=-\frac{k}{2}=\textnormal{E}^{\textnormal{per}}_{\|\cdot\|}\left(L_{\|\cdot\|}\right):=\frac{1}{2}\sum_{p\in L_{\|\cdot\|} \atop p\neq (0,0)} V_{\textnormal{HR}}(\|p\|),
$$
where $\textnormal{E}^{\textnormal{per}}_{\|\cdot\|}(L_{\|\cdot\|})$ is the energy per point of $L_{\|\cdot\|}$ for the interaction potential $V_{\textnormal{HR}}$.
\end{thm}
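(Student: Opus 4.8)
The plan is to reduce everything to Brass' combinatorial result \cite{Brass}, which states that among all configurations $X_N \subset \R^2$ with minimal distance equal to $1$ for the norm $\|\cdot\|$, the maximal number of unit-distance pairs (i.e.\ edges of the minimal-distance graph $G_N$) equals $\lfloor 3N - \sqrt{12N-3}\rfloor$ if $\|\cdot\|\in\mathcal{N}_6$ and $\lfloor 4N - \sqrt{28N-12}\rfloor$ if $\|\cdot\|\in\mathcal{N}_8$, and moreover that equality forces $X_N$ to be (up to isometry) a patch of $\mathsf{A}_2$ in the first case and of $\Z^2$ in the second (after an affine normalization so that the relevant norm-balls become the Euclidean, resp.\ $\ell^\infty$, balls). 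I would organize the argument in the following steps.

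\emph{Step 1: Reduction to counting unit-distance pairs.} Since $V_{\textnormal{HR}}$ takes the value $+\infty$ on $[0,1)$, any configuration of finite energy satisfies $\|x_i - x_j\| \geq 1$ for all $i\neq j$; rescaling the whole configuration by $1/d(X_N)$ only decreases pairwise distances in a way that can only increase the count of unit distances, so without loss of generality $d(X_N)=1$. Then $V_{\textnormal{HR}}$ contributes $-1$ for each pair at distance exactly $1$ and $0$ otherwise, hence $\textnormal{E}_{\|\cdot\|}(X_N) = -\,\#E_N$, the negative of the number of edges of the minimal-distance graph. Minimizing $\textnormal{E}_{\|\cdot\|}$ is therefore exactly maximizing $\#E_N$ over configurations with $d(X_N)=1$, which is the quantity Brass computes. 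This already gives the values of $\mathcal{E}_{\|\cdot\|}(N)$ in both cases via Lemma~\ref{lem:kissing}.

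\emph{Step 2: Identification of minimizers and the role of $T_{\|\cdot\|}$.} Brass' rigidity statement gives that any maximizer is, up to an affine transform determined by $\|\cdot\|$, a patch of $\mathsf{A}_2$ (case $\mathcal{N}_6$) or $\Z^2$ (case $\mathcal{N}_8$); pulling this back defines the lattice $L_{\|\cdot\|}$ and the linear map $T_{\|\cdot\|}$ (its existence and uniqueness in reduced form is Lemma~\ref{lem:linear}). Conversely, one must check that the canonical configurations $T_{\|\cdot\|}(H_N)$ and $T_{\|\cdot\|}(Z_N)$ actually attain the bound: by construction $H_N$ has $\lfloor 3N-\sqrt{12N-3}\rfloor$ unit Euclidean distances and $Z_N$ has $\lfloor 4N-\sqrt{28N-12}\rfloor$ unit $\ell^\infty$-distances, and since $T_{\|\cdot\|}$ is the affine map sending the normalized $\|\cdot\|$-ball to the Euclidean (resp.\ $\ell^\infty$) ball, it sends pairs at $\|\cdot\|$-distance $1$ bijectively to pairs at the corresponding unit distance, so the edge count is preserved. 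Hence these configurations are genuine minimizers.

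\emph{Step 3: Thermodynamic limit.} Divide the explicit formula for $\mathcal{E}_{\|\cdot\|}(N)$ by $N$ and let $N\to\infty$: $\frac{1}{N}\lfloor 3N - \sqrt{12N-3}\rfloor \to 3$ and $\frac{1}{N}\lfloor 4N - \sqrt{28N-12}\rfloor \to 4$, so the limit is $-3 = -6/2$, resp.\ $-4 = -8/2$, i.e.\ $-k/2$. To see this equals $\textnormal{E}^{\textnormal{per}}_{\|\cdot\|}(L_{\|\cdot\|})$, note that in $L_{\|\cdot\|}$ the minimal $\|\cdot\|$-distance is $1$ and the number of lattice points at distance exactly $1$ from the origin is precisely the kissing number $k$ (this is what the affine normalization and the structure of $\mathsf{A}_2$, resp.\ $\Z^2$, give), while all other lattice points are at distance $>1$ where $V_{\textnormal{HR}}$ vanishes; thus $\textnormal{E}^{\textnormal{per}}_{\|\cdot\|}(L_{\|\cdot\|}) = \frac{1}{2}(k\cdot(-1)) = -k/2$.

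\emph{Main obstacle.} The genuinely hard input is entirely external: Brass' theorem on the maximal number of unit distances and its rigidity part, whose proof is combinatorial-geometric and which we are invoking as a black box. On our side, the only delicate point is Step~2 --- verifying that the affine transform $T_{\|\cdot\|}$ is exactly compatible with Brass' normalization so that the bound is attained (and not merely that a lattice patch is optimal in the abstract), and checking carefully that the edge counts of $H_N$ and $Z_N$ match the formulas; for $Z_N$ in particular the combinatorics of filling the octagon layer by layer (the order $H,A,B,C,D,B,\dots$) is intricate, but this is precisely the bookkeeping already carried out in \cite{Brass} and recalled in the construction above, so I would cite it rather than redo it.
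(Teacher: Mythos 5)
Your proposal follows essentially the same route as the paper: reduce the Heitmann--Radin energy to maximizing the number of occurrences of the smallest distance, invoke Brass's extremal count and its rigidity part according to the kissing number, transport $H_N$ (resp.\ $Z_N$) by a linear map to get explicit minimizers, and divide by $N$ for the thermodynamic limit. One claim in your Step~2 is, however, false as stated: for $\|\cdot\|\in\mathcal{N}_6$ there is in general \emph{no} affine map sending the unit ball of $\|\cdot\|$ to the Euclidean disk (consider $\|\cdot\|_3$, whose ball is not an ellipse), so $T_{\|\cdot\|}$ cannot ``send pairs at $\|\cdot\|$-distance $1$ bijectively to pairs at Euclidean distance $1$.'' What is actually true, and what Brass and the paper use, is weaker: one can inscribe an affinely regular hexagon in $S_{\|\cdot\|}$, i.e.\ choose a linear $T_{\|\cdot\|}$ mapping the six minimal vectors of $\mathsf{A}_2$ onto six unit vectors of $\|\cdot\|$; since every edge of the minimal-distance graph of $H_N$ points in one of these six directions, $T_{\|\cdot\|}(H_N)$ still realizes exactly $\lfloor 3N-\sqrt{12N-3}\rfloor$ unit distances, which is all that is needed. (In the $\mathcal{N}_8$ case your statement is fine: a parallelogram is an affine image of the square, so $T_{\|\cdot\|}$ really is an isometry between $(\R^2,\|\cdot\|_\infty)$ and $(\R^2,\|\cdot\|)$ and preserves all distances.) With that repair, which does not change the architecture of the argument, your proof coincides with the paper's.
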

\begin{proof}
It is clear that minimizing $\textnormal{E}_{\|\cdot\|}$ is equivalent -- since the number of pairs at distance $1$ is multiplied by $-1$ in the energy -- to find a configuration $X_N$ of $N$ points in the plane such that $d(X_N)=1$, and with a maximal number of pairs at distance $1$. This is therefore the same as finding a configuration such that the maximum number of occurrences of the smallest distance in a set of $N$ points is achieved (up to a simple dilation). This problem has been studied by Brass in \cite[Theorem 3]{Brass} where it has been shown that if $S_{\|\cdot\|}$ is (resp. is not) a parallelogram, i.e., when $\|\cdot\|\in \mathcal{N}_8$ (resp. $\mathcal{N}_6$) then this number of occurrences is $\lfloor4N-\sqrt{28N-12} \rfloor$ (resp. $\lfloor 3N-\sqrt{12N-3} \rfloor $).\\
These two cases correspond to the two presented in Brass \cite[Section 5]{Brass} where the following has been shown:
\begin{enumerate}
\item if $\|\cdot\|\in \mathcal{N}_6$, then the euclidean norm is the prototypical case for which the minimizing configuration a subset of $\mathsf{A}_2$ achieved by the Heitmann-Radin's construction of $H_N$ we have recalled above. Indeed, for each $\|\cdot\|\in \mathcal{N}_6$ there is an affine image of $\mathsf{A}_2$ in which euclidean smallest distances are mapped on smallest distances of that norm. So each of the minimizers in the $\|\cdot\|_2$ case, which are known to be subsets of $\mathsf{A}_2$ as shown in \cite{Rad2}, corresponds to a set with the same number of smallest distances with respect to $\|\cdot\|$. Therefore, that ensures the existence of the linear map $T_{\|\cdot\|}$.
\item if $\|\cdot\|\in \mathcal{N}_8$, the same strategy is used for the infinite norm $\|\cdot\|_\infty$ instead of the euclidean one, which gives $\Z^2$ instead of $\mathsf{A}_2$ as shown by Brass, and for which $Z_N$ is optimal and ensures the existence of a linear map $T_{\|\cdot\|}$ such that a minimizer is given by $T_{\|\cdot\|}(Z_N)$.
\end{enumerate}
Furthermore, in both cases, any minimizer has to be a subset of $L_{\|\cdot\|}$, up to isometry, since it has been shown by Brass (and also Heitmann-Radin) for the triangular case and by Del Nin and De Luca \cite{NinLuca25} for the square case that minimizers are necessarily subsets of $\mathsf{A}_2$ for the euclidean norm and $\Z^2$ for the infinite norm. If there were a minimizer, for $\|\cdot\|\not\in \{ \|\cdot\|_2,\|\cdot\|_\infty\}$, being not a subset of the corresponding lattice $L_{\|\cdot\|}$, there would be, by linear transform, another minimizer being not a subset of $\mathsf{A}_2$ or $\Z^2$ in the euclidean or infinite norm case, which is therefore impossible.\\
The fact that $L_{\|\cdot\|}$ is asymptotically minimal in the sense of the thermodynamic limit is easy to show using the exact value of $\mathcal{E}_{\|\cdot\|}(N)$ (but also totally straightforward using the kissing number as in the proof of \cite[Theorem 2.1]{BDLPSquare}). The energy per point is therefore
$$
\frac{1}{2}\sum_{p\in L_{\|\cdot\|} \atop p\neq (0,0)} V_{\textnormal{HR}}(\|p\|)=\frac{1}{2}\sum_{p\in L_{\|\cdot\|} \atop \|p\|=1} -1=-\frac{1}{2}K_{\|\cdot\|},
$$
which completes the proof.
\end{proof}
\begin{remark}[\textbf{Lowest possible energy among norms and configurations}]
From this result, we automatically have
$$
\min_{\|\cdot\| \textnormal{ norm on $\R^2$}}\min_{X_N\subset \R^2} \textnormal{E}_{\|\cdot\|}(X_N)=-\lfloor4N-\sqrt{28N-12} \rfloor,
$$
achieved for any norm $\|\cdot\|\in \mathcal{N}_8$, i.e., a norm with a unit sphere being a parallelogram.
\end{remark}

\begin{remark}[\textbf{Uniqueness}]
It is important to notice the following: Theorem \ref{thm:main} does not claim that $T_{\|\cdot\|}(H_N)$ or $T_{\|\cdot\|}(Z_N)$ is the unique minimizer (up to isometry) of $\textnormal{E}_{\|\cdot\|}$. This fact can be only stated for $T_{\|\cdot\|}(H_N)$ and for special values of $N$, using a result by De Luca and Friesecke for the euclidean case \cite{DeLucaFrieseckeCassification}. Since we do not know the same for $Z_N$, we have chosen not to state any partial result of this kind.\\
Furthermore, let us mention that non-uniqueness for minimizers as well as the next-order term of order $N^{\frac{3}{4}}$ in the asymptotics of the energy as $N$ goes to infinity (including the Wullf shape) have been investigated, in the euclidean sticky disk or for other short-range interaction models, for instance in \cite{Schmidt2013,DPS17,MPSS19}.
\end{remark}

Furthermore, Lemma \ref{lem:linear} gives explicit formulas for $T_{\|\cdot\|}$ (once $L_{\|\cdot\|}$ is known).

\begin{corollary}[\textbf{Explicit linear maps}]
Let $L\in \mathcal{L}_2$ be written in its reduced form $L=\Z(u_1,0)\oplus \Z(v_1,v_2)$ as in \eqref{eq:Engel}, then:
\begin{enumerate}
\item There exists a unique linear transform $T_{\mathsf{A}_2,L}$ such that $T_{\mathsf{A}_2,L}(\mathsf{A}_2)=L$ which is given by $\forall (x,y)\in \R^2, T_{\mathsf{A}_2,L}(x,y)=\left(u_1 x + \frac{2}{\sqrt{3}}\left(v_1-\frac{u_1}{2}\right)y, \frac{2}{\sqrt{3}}v_2y \right)$.
\item There exists a unique linear transform $T_{\Z^2,L}$ such that $T_{\Z^2,L}(\Z^2)=L$ which is given by $\forall (x,y)\in \R^2,T_{\Z^2,L}(x,y)=(u_1x+v_1y,v_2y)$.
\end{enumerate}
\end{corollary}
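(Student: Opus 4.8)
This corollary is a direct specialization of Lemma \ref{lem:linear}, so the plan is short. The first step is to notice that $\mathsf{A}_2$ and $\Z^2$ are \emph{themselves} already written in the reduced form \eqref{eq:Engel}. For $\mathsf{A}_2$ the generators are $(1,0)$ and $\left(\frac12,\frac{\sqrt3}{2}\right)$: both have euclidean norm $1$, so $\|(u_1,0)\|_2\leq\|(v_1,v_2)\|_2$ holds with equality, and their angle $\theta$ satisfies $\cos\theta=\frac12$, i.e. $\theta=\frac{\pi}{3}\in\left[\frac\pi3,\frac\pi2\right]$. For $\Z^2$ the generators $(1,0)$, $(0,1)$ have equal norm and are orthogonal, so $\theta=\frac\pi2\in\left[\frac\pi3,\frac\pi2\right]$. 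Hence each of these two lattices is admissible as the source lattice ``$L'$'' in Lemma \ref{lem:linear}, with parameters $(u_1',v_1',v_2')=\left(1,\frac12,\frac{\sqrt3}{2}\right)$ in the triangular case and $(u_1',v_1',v_2')=(1,0,1)$ in the square case.

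For statement (1) I would apply Lemma \ref{lem:linear} with $L'=\mathsf{A}_2$ and $L=\Z(u_1,0)\oplus\Z(v_1,v_2)$ the given target lattice in reduced form, and substitute $u_1'=1$, $v_1'=\frac12$, $v_2'=\frac{\sqrt3}{2}$ into the displayed formula of that lemma. This gives $\frac{u_1}{u_1'}=u_1$ for the coefficient of $x$, $\frac{1}{v_2'}\left(v_1-\frac{u_1 v_1'}{u_1'}\right)=\frac{2}{\sqrt3}\left(v_1-\frac{u_1}{2}\right)$ for the coefficient of $y$ in the first coordinate, and $\frac{v_2}{v_2'}=\frac{2}{\sqrt3}v_2$ in the second coordinate, which is exactly the claimed expression for $T_{\mathsf{A}_2,L}$. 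For statement (2) I would likewise apply Lemma \ref{lem:linear} with $L'=\Z^2$, substituting $u_1'=1$, $v_1'=0$, $v_2'=1$; the middle term then collapses to $v_1 y$ and one obtains $T_{\Z^2,L}(x,y)=(u_1x+v_1y,v_2y)$.

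Uniqueness of each of the two linear maps is inherited verbatim from the uniqueness assertion in Lemma \ref{lem:linear}. There is essentially no obstacle here: the only verification genuinely needed is the (trivial) check that $\mathsf{A}_2$ and $\Z^2$ meet the three normalization constraints of \eqref{eq:Engel}, after which both formulas follow by a one-line substitution.
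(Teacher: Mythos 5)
Your proof is correct and follows exactly the route the paper intends: the corollary is stated without proof precisely because it is the direct substitution of $(u_1',v_1',v_2')=\left(1,\tfrac12,\tfrac{\sqrt3}{2}\right)$ (resp. $(1,0,1)$) into the formula of Lemma \ref{lem:linear}, after the easy check that $\mathsf{A}_2$ and $\Z^2$ satisfy the normalization \eqref{eq:Engel}. Nothing is missing.
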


\subsection{Consequences for the $p$-norms}

\begin{corollary}[\textbf{Crystallization for the $p$-norms}]\label{cor:pnorms}
Let $N\in \N\backslash \{0,1\}$. Let $\|\cdot\|_p$ be a $p$-norm, $p\geq 1$, and $\textnormal{E}_{\|\cdot\|_p}$ defined in \eqref{eq:Enorm} with $V=V_{\textnormal{HR}}$ the Heitmann-Radin sticky disk potential defined by \eqref{eq:VHR}. Then:
\begin{enumerate}
\item If $p=1$, then $\mathcal{E}_{\|\cdot\|_1}(N)=-\lfloor4N-\sqrt{28N-12} \rfloor$ achieved for $X_N=T_1(Z_N)$ where
$$
L_1=\Z(1,0)\oplus\Z\left(\frac{1}{2},\frac{1}{2} \right)\in \mathcal{L}_2,\quad \forall (x,y)\in \R^2, T_1(x,y)=\left(x+\frac{y}{2},\frac{y}{2}\right).
$$
\item If $p=\infty$, then $\mathcal{E}_{\|\cdot\|_\infty}(N)=-\lfloor4N-\sqrt{28N-12} \rfloor$ achieved for $X_N=Z_N\subset \Z^2$.
\item If $p>1$, $p\neq \infty$, then $\mathcal{E}_{\|\cdot\|_p}(N)=-\lfloor 3N-\sqrt{12N-3} \rfloor$ achieved for $X_N=T_p(H_N)\subset L_p$ where
$$
L_p:=\Z(1,0)\oplus \Z\left(\frac{1}{2},\left(1-\frac{1}{2^p}\right)^{\frac{1}{p}} \right)\in \mathcal{L}_2,\quad \forall (x,y)\in \R^2, T_p(x,y)=\left( x , \frac{2}{\sqrt{3}}\left(1-\frac{1}{2^p}\right)^{\frac{1}{p}}y\right).
$$
\end{enumerate}
\end{corollary}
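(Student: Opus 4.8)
The result is a specialization of Theorem~\ref{thm:main}, so the plan has two steps: determine the kissing class of each $\|\cdot\|_p$, which immediately fixes the value of $\mathcal{E}_{\|\cdot\|_p}(N)$ via that theorem; then make the abstract minimizer it produces completely explicit. For the first step I would quote the Example following Lemma~\ref{lem:kissing}: the unit spheres of $\|\cdot\|_1$ and $\|\cdot\|_\infty$ are squares, hence parallelograms, so $\|\cdot\|_1,\|\cdot\|_\infty\in\mathcal{N}_8$, whereas for $1<p<\infty$ the unit sphere is strictly convex, so $\|\cdot\|_p\in\mathcal{N}_6$. Theorem~\ref{thm:main} then gives at once $\mathcal{E}_{\|\cdot\|_p}(N)=-\lfloor4N-\sqrt{28N-12}\rfloor$ when $p\in\{1,\infty\}$ and $\mathcal{E}_{\|\cdot\|_p}(N)=-\lfloor3N-\sqrt{12N-3}\rfloor$ when $1<p<\infty$; only the explicit description of a minimizer is left.

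The case $p=\infty$ is trivial: $\Z^2$ itself has minimal $\|\cdot\|_\infty$-distance $1$ and the eight ``king'' vectors $(\pm1,0),(0,\pm1),(\pm1,\pm1)$ of $\|\cdot\|_\infty$-norm $1$, so one takes $L_{\|\cdot\|_\infty}=\Z^2$, $T_{\|\cdot\|_\infty}=\mathrm{Id}$, and $Z_N$ --- which satisfies $\textnormal{E}_{\|\cdot\|_\infty}(Z_N)=-\lfloor4N-\sqrt{28N-12}\rfloor$ by its construction in Section~\ref{sec:main} --- is a minimizer. For $p=1$ and for $1<p<\infty$ the remaining task is, in each case, to verify two facts about an explicit lattice $L_p$ and an explicit linear map $T_p$: (a) $L_p$ has minimal $\|\cdot\|_p$-distance equal to $1$; (b) $T_p$ is a linear bijection with $T_p(\Z^2)=L_p$ (resp. $T_p(\mathsf{A}_2)=L_p$) that maps every $\|\cdot\|_\infty$-unit vector of $\Z^2$ (resp. every euclidean-unit vector of $\mathsf{A}_2$) to a $\|\cdot\|_p$-unit vector. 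Indeed, given (a) and (b), the configuration $T_p(Z_N)\subset L_p$ (resp. $T_p(H_N)\subset L_p$) has minimal $\|\cdot\|_p$-distance $\ge1$, while each of the $\lfloor4N-\sqrt{28N-12}\rfloor$ (resp. $\lfloor3N-\sqrt{12N-3}\rfloor$) edges of the minimal-distance graph of $Z_N$ (resp. $H_N$) is carried to a pair at $\|\cdot\|_p$-distance $1$; thus this configuration attains the energy computed above and, being a patch of a lattice, is a minimizer by Theorem~\ref{thm:main}.

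Concretely, set $c_p:=(1-2^{-p})^{1/p}$. For $1<p<\infty$ take $L_p=\Z(1,0)\oplus\Z(\tfrac12,c_p)$ and the diagonal (hence invertible) map $T_p(x,y)=(x,\tfrac{2c_p}{\sqrt3}\,y)$, so that $T_p(\mathsf{A}_2)=L_p$ and $T_p$ sends the six euclidean-unit vectors $(\pm1,0),(\pm\tfrac12,\pm\tfrac{\sqrt3}{2})$ of $\mathsf{A}_2$ to $(\pm1,0),(\pm\tfrac12,\pm c_p)$, each of $\|\cdot\|_p$-norm $\big(2^{-p}+(1-2^{-p})\big)^{1/p}=1$; a short inspection of the coordinates $(a+\tfrac b2,bc_p)$, $(a,b)\in\Z^2$, shows that every other nonzero vector of $L_p$ has $\|\cdot\|_p$-norm at least $\min(\tfrac32,2c_p)$, which is $\ge1$ precisely because $p\ge1\iff c_p\ge\tfrac12$, so (a) holds. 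For $p=1$ take $L_1=\Z(1,0)\oplus\Z(\tfrac12,\tfrac12)$, for which a direct check gives that its vectors of $\|\cdot\|_1$-norm $1$ are exactly $(\pm1,0),(0,\pm1),(\pm\tfrac12,\pm\tfrac12)$ (so (a) holds), and take a linear $T_1$ with $T_1(\Z^2)=L_1$ sending the eight $\|\cdot\|_\infty$-unit vectors of $\Z^2$ onto these eight vectors --- for instance $T_1(x,y)=(\tfrac{x+y}{2},\tfrac{x-y}{2})$, which satisfies $\|T_1(x,y)\|_1=\|(x,y)\|_\infty$ and hence maps the $\ell^\infty$ unit square onto the $\ell^1$ unit square.

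The only step that is not routine verification on top of Theorem~\ref{thm:main} --- and which I expect to be the (modest) crux --- is fact (b): that $T_p$ carries smallest distances to smallest distances. For $1<p<\infty$ this boils down to the one-line inequality $2c_p\ge1$ (equivalently $p\ge1$); for $p=1$ the subtlety is that all eight --- not merely six --- of the $\|\cdot\|_\infty$-unit directions, including the two diagonals $(\pm1,\pm1)$, must land on $\|\cdot\|_1$-unit vectors, which, up to the symmetries of the square, forces $T_1$ to be the linear map identifying the $\ell^\infty$ and $\ell^1$ unit balls. Everything else follows from Theorem~\ref{thm:main}.
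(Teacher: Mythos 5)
Your proposal follows the same overall route as the paper: classify $\|\cdot\|_p$ by its kissing number, invoke Theorem~\ref{thm:main} for the values of $\mathcal{E}_{\|\cdot\|_p}(N)$, and then exhibit the explicit lattice and linear map. For $p=\infty$ and $1<p<\infty$ your lattices and maps coincide with the paper's, and your extra verifications (that $T_p$ sends the six minimal vectors of $\mathsf{A}_2$ to $\ell^p$-unit vectors, and that every other nonzero vector of $L_p$ has $\ell^p$-norm at least $\min(\tfrac32,2c_p)\geq 1$) are exactly the ``simple computations of distances'' the paper leaves implicit; both checks are correct.

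Where you genuinely diverge is the case $p=1$, and your insistence that the linear map must carry the \emph{minimal-distance set} of $(\Z^2,\|\cdot\|_\infty)$ onto that of $(L_1,\|\cdot\|_1)$ --- not merely a basis onto a basis --- is not pedantry: it is precisely the point on which the paper's stated map fails. The paper takes $T_1=T_{\Z^2,L_1}$ from Lemma~\ref{lem:linear}, i.e. $T_1(x,y)=(x+\tfrac y2,\tfrac y2)$, which sends $(1,1)$ to $(\tfrac32,\tfrac12)$ of $\ell^1$-norm $2$ and has $(-1,2)$ (of $\ell^\infty$-norm $2$) as the preimage of $(0,1)$; consequently the minimal-distance graphs do not correspond. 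Concretely, for $N=4$ one has $Z_4=\{0,1\}^2$ with $6=\lfloor 16-\sqrt{100}\rfloor$ edges, while the paper's $T_1(Z_4)=\{(0,0),(1,0),(\tfrac12,\tfrac12),(\tfrac32,\tfrac12)\}$ has only $5$ pairs at $\ell^1$-distance $1$, so it does not attain $\mathcal{E}_{\|\cdot\|_1}(4)=-6$. Your map $T_1(x,y)=\left(\tfrac{x+y}{2},\tfrac{x-y}{2}\right)$ satisfies $\|T_1(x,y)\|_1=\max(|x|,|y|)$, hence is an isometry from $(\R^2,\|\cdot\|_\infty)$ to $(\R^2,\|\cdot\|_1)$ with $T_1(\Z^2)=L_1$, and $T_1(Z_4)$ does have all $6$ unit distances. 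So your argument is correct and in fact repairs the $p=1$ item of the corollary (the minimal energy and the lattice $L_1$ are as stated in the paper; only the explicit map needs to be replaced by yours, or by any composition of it with a symmetry of the square).
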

\begin{proof}

First, it is known -- by a strict convexity argument -- that the unit sphere for $\|\cdot\|_p$, where $p\in [1,\infty]$, is a parallelogram if and only if $p\in \{1,\infty\}$. This gives, by Theorem \ref{thm:main} and Lemma \ref{lem:kissing}, the minimal energies as well as the fact that a minimizer is the image of $Z_N$ (for $p\in \{1,\infty\}$) or $H_N$ (for finite $p>1$) by a linear map. It remains to find the appropriate lattice as well as its associated linear map. \\
By simple computations of distances, the kissing numbers are achieved by:
\begin{itemize}
\item[•] the set of 8 points $\left\{(\pm 1,0), (0,\pm 1), \pm \left(\frac{1}{2},\frac{1}{2} \right),  \left(\pm\frac{1}{2},\mp\frac{1}{2} \right) \right\}$ on $S_{\|\cdot\|_1}$,
\item[•] the set of 8 points $\left\{ (\pm 1, 0), (0,\pm 1), \pm (1,1), (\pm 1, \mp 1) \right\}$ on $S_{\|\cdot\|_\infty}$,
\item[•] the set of 6 points $\left\{ (\pm 1, 0), \pm\left(\frac{1}{2},\left(1-\frac{1}{2^p}\right)^{\frac{1}{p}} \right), \left(\pm\frac{1}{2},\mp\left(1-\frac{1}{2^p}\right)^{\frac{1}{p}} \right) \right\}$ on $S_{\|\cdot\|_p}$, $p>1$.
\end{itemize}
Therefore, we automatically obtain the lattices $L_p$ and we use Lemma \ref{lem:linear} to get $T_p$ since $T_1=T_{\Z^2,L_1}$ and $T_p=T_{\mathsf{A}_2, L_p}$ where $p>1$.
\end{proof}

\begin{remark}[\textbf{Square versus octagon}]
In the $p=\infty$ case, even though any minimizer is a subset of $\Z^2$, the boundary of $Z_N$, when $N$ is a square number, is not a square but an octagon, which is due to the value of the kissing number (8). See for instance Figure \ref{fig:sqvsoct} where a perfect $4\times 4$ square and $Z_{16}$ are compared.\\

\begin{figure}[!h]
\centering
 \includegraphics[width=8cm]{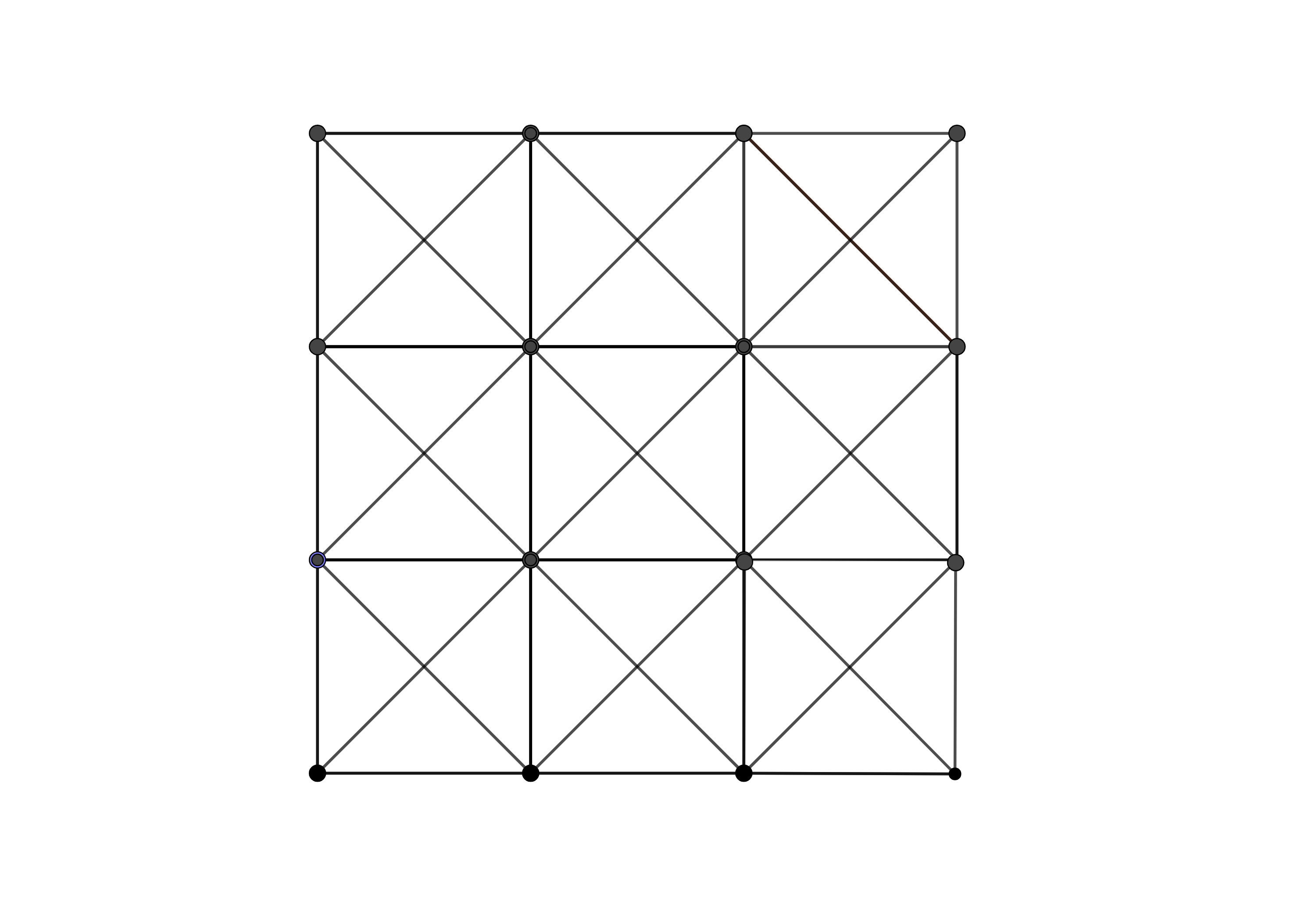} \quad 
\includegraphics[width=8cm]{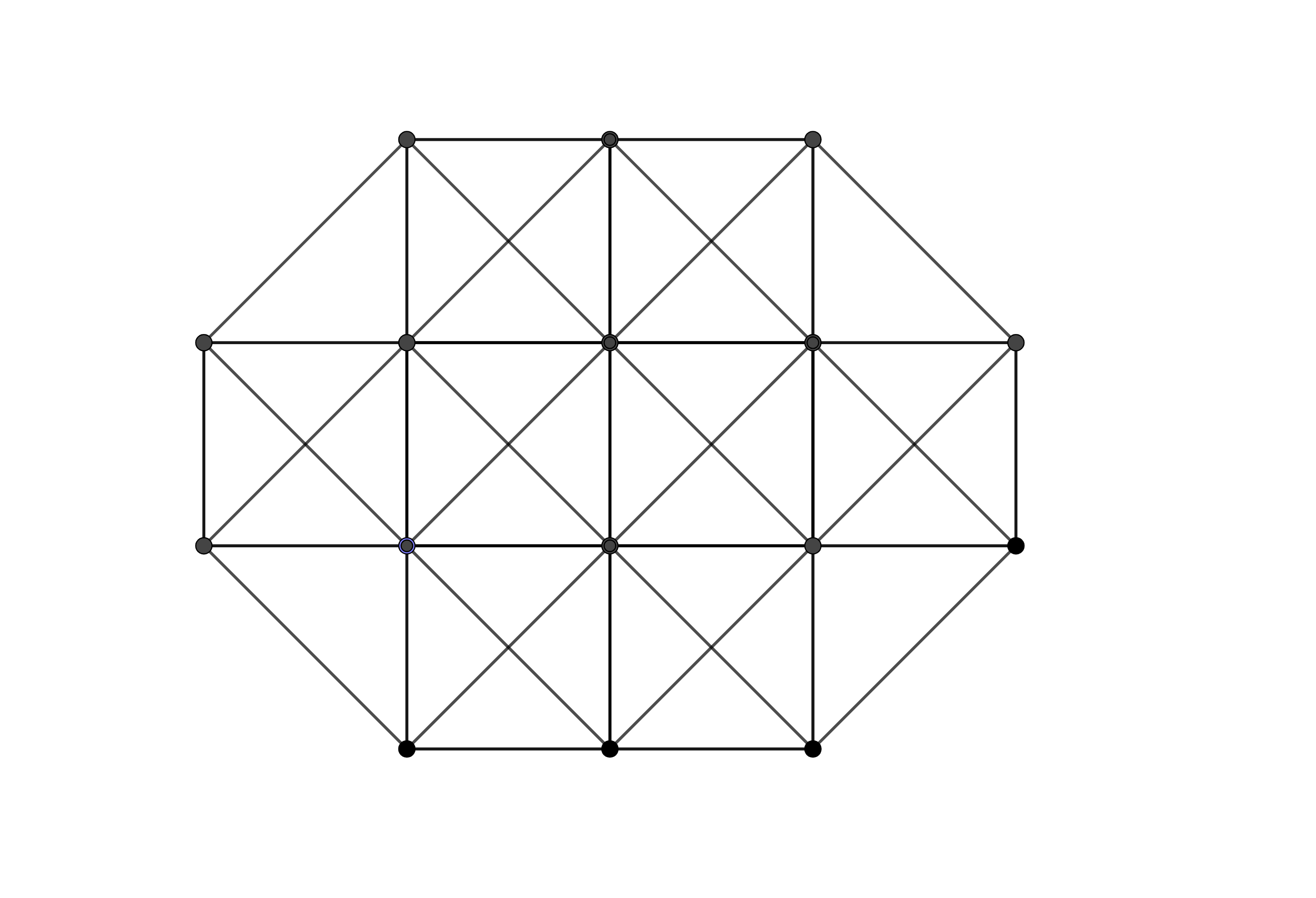}
\caption{For $N=16$ and the infinite norm $\|\cdot\|_\infty$, the perfect $4\times 4$ square configuration (on the left) has $42$ edges whereas the true minimal configuration $Z_{16}$ (on the right) has $43$ edges.}
\label{fig:sqvsoct}
\end{figure}
\end{remark}

\begin{example}
As examples, we have constructed the minimizers for the $p\in \{1,3\}$ cases in Figure \ref{fig:p13} below.
\begin{figure}[!h]
\centering
  \includegraphics[width=6cm]{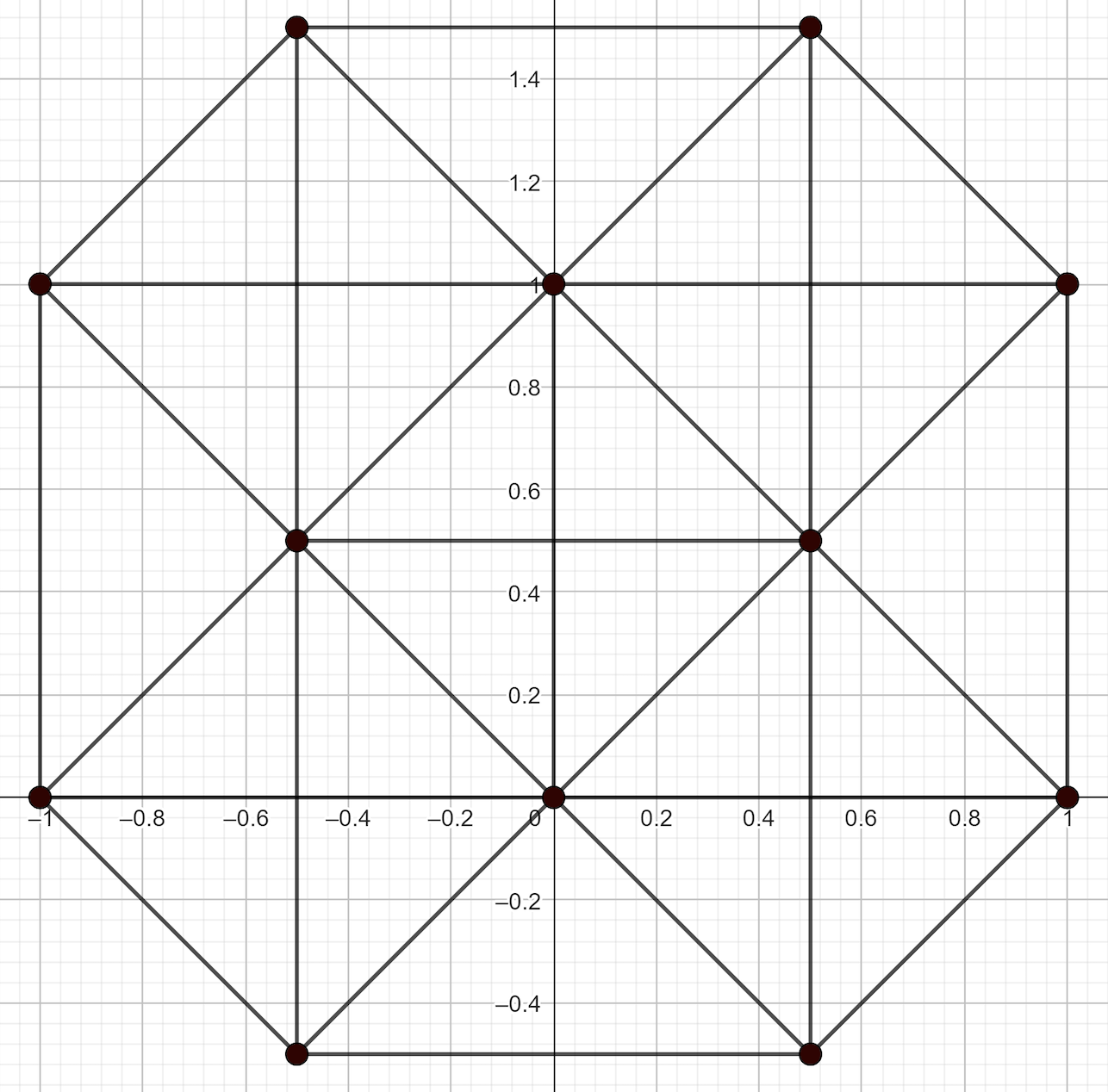} \quad\quad\quad \includegraphics[width=6cm]{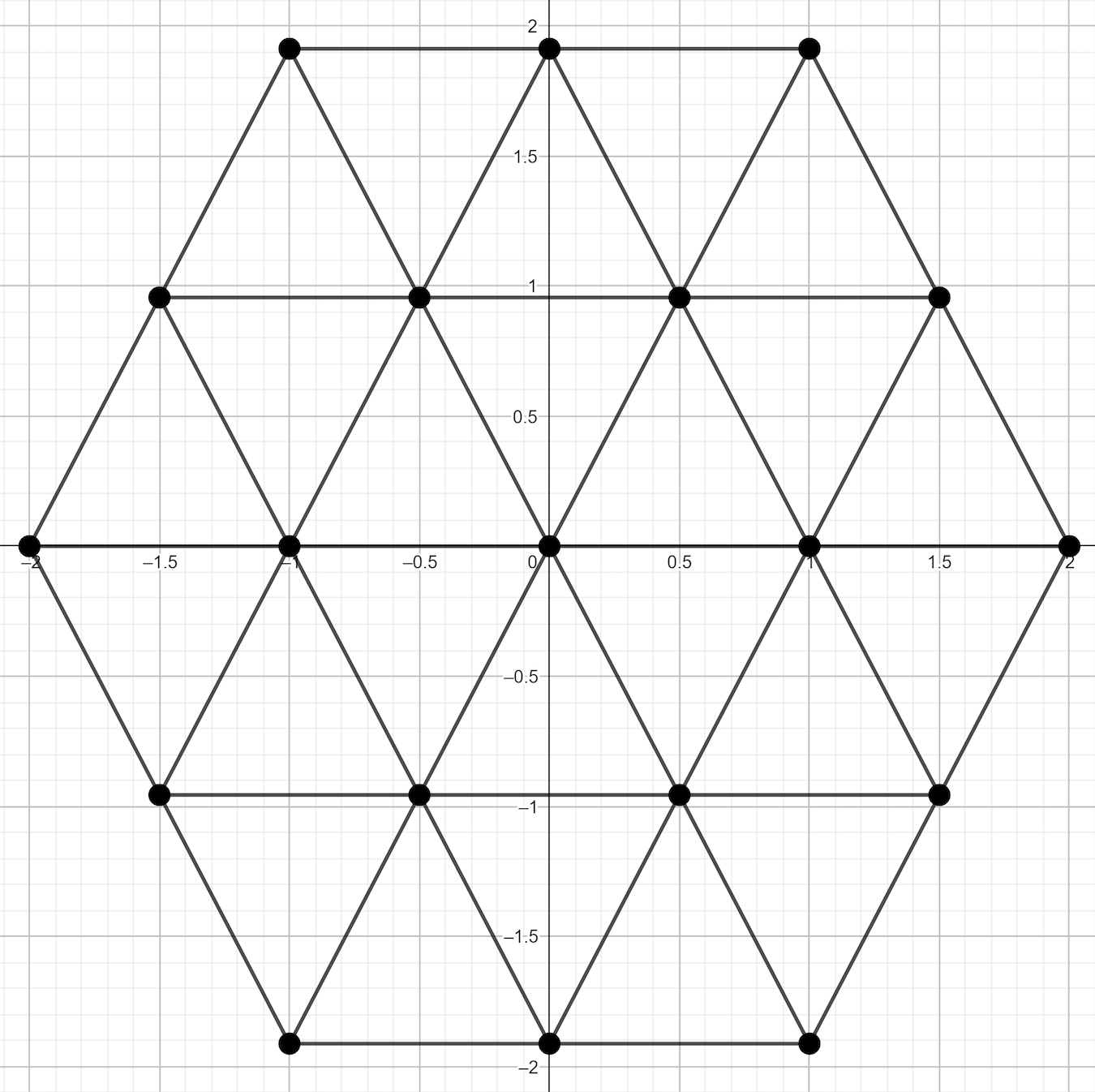}
\caption{Minimizer of $E_{\|\cdot\|_p}$ for $(p,N)=(1,12)$ (on the left) and $(p,N)=(3,19)$ (on the right)}
\label{fig:p13}
\end{figure}
\end{example}

The following result gives an infinite family of norms, based on $p$-norms, implying crystallization on a given lattice $L$.

\begin{prop}[\textbf{Norms implying crystallization on a given lattice}]\label{prop:normlinear}
Let $L\in \mathcal{L}_2$ written in its reduced form $L=\Z(u_1,0)\oplus \Z(v_1,v_2)$ as in \eqref{eq:Engel}, then for any $p\in [1,\infty]$, if $N_{p,L}:\R^2\to \R$ is defined by
$$
\forall (x_1,x_2)\in \R^2, N_{p,L}(x_1,x_2)=
 \left\{ \begin{array}{ll}
\sqrt[p]{\left| \frac{1}{u_1}x+\frac{1}{v_2}\left(\frac{1}{2}-\frac{v_1}{u_1} \right)y \right|^p + \displaystyle\left(1-\frac{1}{2^p} \right)\frac{|y|^p}{v_2^p}} & \mbox{if $p\geq 1$}\\
\max\left(\left|\frac{1}{u_1}x-\frac{v_1}{u_1 v_2}y \right|,\left|\frac{1}{v_2}y  \right| \right) & \mbox{if $p=\infty$,}
\end{array} \right.
$$
then $N_{p,L}$ is a norm and $\textnormal{E}_{N_{p,L}}$ is minimized by a patch of $L$, the latter being therefore an asymptotic minimizer in the thermodynamic limit sense for the energy.

\end{prop}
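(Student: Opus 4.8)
The plan is to recognise $N_{p,L}$ as the pull-back of the $p$-norm $\|\cdot\|_p$ under an explicit linear isomorphism of $\R^2$ mapping $L$ onto the lattice $L_p$ of Corollary~\ref{cor:pnorms}, and then to transport the crystallization result for $\|\cdot\|_p$ through that isomorphism. Concretely, for $p\in[1,\infty)$ I would set
$$
\Psi_p(x,y):=\left(\frac{x}{u_1}+\frac{1}{v_2}\left(\frac{1}{2}-\frac{v_1}{u_1}\right)y,\ \left(1-\frac{1}{2^p}\right)^{\frac{1}{p}}\frac{y}{v_2}\right),
$$
and $\Psi_\infty(x,y):=\left(\frac{x}{u_1}-\frac{v_1}{u_1 v_2}y,\ \frac{y}{v_2}\right)$. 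Since $u_1,v_2>0$, the determinant of $\Psi_p$ is nonzero, so $\Psi_p$ is a linear isomorphism; and rewriting the defining formula for $N_{p,L}$ — using $\left(1-2^{-p}\right)\frac{|y|^p}{v_2^p}=\left|\left(1-2^{-p}\right)^{1/p}\frac{y}{v_2}\right|^p$ in the finite case — shows at once that $N_{p,L}=\|\cdot\|_p\circ\Psi_p$. Being the composition of a norm with a linear isomorphism, $N_{p,L}$ is a norm.

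Next I would identify the image lattice by evaluating $\Psi_p$ on the reduced basis $(u_1,0),(v_1,v_2)$ of $L$: a short computation gives $\Psi_p(u_1,0)=(1,0)$ and $\Psi_p(v_1,v_2)=\left(\frac{1}{2},\left(1-2^{-p}\right)^{1/p}\right)$ for finite $p$, while $\Psi_\infty(u_1,0)=(1,0)$ and $\Psi_\infty(v_1,v_2)=(0,1)$. Hence $\Psi_p(L)=L_p$ for $1<p<\infty$, $\Psi_1(L)=L_1=\Z(1,0)\oplus\Z\left(\frac{1}{2},\frac{1}{2}\right)$, and $\Psi_\infty(L)=\Z^2$ — exactly the lattices appearing in Corollary~\ref{cor:pnorms}.

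Now the transport step. Because $N_{p,L}=\|\cdot\|_p\circ\Psi_p$ with $\Psi_p$ linear and bijective, one has $\|\Psi_p(x_i)-\Psi_p(x_j)\|_p=N_{p,L}(x_i-x_j)$ for all $i,j$, so $X_N\mapsto\Psi_p(X_N)$ is a bijection on $N$-point configurations with $\textnormal{E}_{N_{p,L}}(X_N)=\textnormal{E}_{\|\cdot\|_p}(\Psi_p(X_N))$ for $V=V_{\textnormal{HR}}$; therefore $\mathcal{E}_{N_{p,L}}(N)=\mathcal{E}_{\|\cdot\|_p}(N)$ and $X_N$ minimizes $\textnormal{E}_{N_{p,L}}$ if and only if $\Psi_p(X_N)$ minimizes $\textnormal{E}_{\|\cdot\|_p}$. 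By Corollary~\ref{cor:pnorms}, $\textnormal{E}_{\|\cdot\|_p}$ admits a minimizer that is a patch of $L_p$ (namely $T_p(H_N)$ if $1<p<\infty$, $T_1(Z_N)$ if $p=1$, $Z_N$ if $p=\infty$); its inverse image under $\Psi_p$ is then a minimizer of $\textnormal{E}_{N_{p,L}}$ contained in $\Psi_p^{-1}(L_p)=L$, which proves that $\textnormal{E}_{N_{p,L}}$ is minimized by a patch of $L$. For the thermodynamic limit I would combine $\mathcal{E}_{N_{p,L}}(N)=\mathcal{E}_{\|\cdot\|_p}(N)$ with Theorem~\ref{thm:main} applied to $\|\cdot\|_p$, so that $\mathcal{E}_{N_{p,L}}(N)/N\to -K_{\|\cdot\|_p}/2$; since a linear isomorphism maps parallelograms to parallelograms and preserves strict convexity, $S_{N_{p,L}}=\Psi_p^{-1}(S_{\|\cdot\|_p})$ lies in the same class $\mathcal{N}_k$ as $S_{\|\cdot\|_p}$ by Lemma~\ref{lem:kissing}, whence $K_{N_{p,L}}=K_{\|\cdot\|_p}$; and the counting argument at the end of the proof of Theorem~\ref{thm:main}, together with $\Psi_p(L)=L_p$, yields that the energy per point $\textnormal{E}^{\textnormal{per}}_{N_{p,L}}(L)$ equals $-K_{N_{p,L}}/2=\lim_{N\to\infty}\mathcal{E}_{N_{p,L}}(N)/N$, i.e. $L$ is an asymptotic minimizer in the thermodynamic limit sense.

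The only delicate points in this scheme are purely computational: verifying the identity $N_{p,L}=\|\cdot\|_p\circ\Psi_p$ and that $\Psi_p$ carries the reduced basis of $L$ onto the prescribed basis of $L_p$, and checking that the endpoints $p=1$ and $p=\infty$ are treated consistently (in particular that the $p=1$ instance of the ``$p\geq 1$'' branch of the formula matches the square-lattice description $L_1=\Z(1,0)\oplus\Z\left(\frac{1}{2},\frac{1}{2}\right)$ used in Corollary~\ref{cor:pnorms}). Beyond this bookkeeping there is no structural obstacle, since all the hard combinatorial-geometry input is already packaged in Theorem~\ref{thm:main} and Corollary~\ref{cor:pnorms}.
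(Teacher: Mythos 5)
Your proposal is correct and follows essentially the same route as the paper: both identify $N_{p,L}$ as $\|\cdot\|_p\circ T_{L,L_p}$ (your $\Psi_p$ is exactly the map the paper computes from Lemma \ref{lem:linear}), conclude that it is a norm as a pull-back under a linear isomorphism, and transport the minimizers of Corollary \ref{cor:pnorms} back to patches of $L$. You merely spell out the transport of configurations and the thermodynamic-limit step in more detail than the paper, which compresses them into ``by composition'' and an appeal to Theorem \ref{thm:main}.
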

\begin{proof}
Let $p\in [1,\infty]$. With the notations of Corollary \ref{cor:pnorms}, let $T_{L_p,L}$ mapping $L_p$ to $L$. We look for $N_{p,L}$ of the form $N_{p,L}=\|\cdot\|_p\circ T$ with $T$ being an isomorphism from $\R^2$ to $\R^2$, which ensures the fact that $N_{p,L}$ will be a norm. Therefore, writing $S$ (resp. $S'$) the unit sphere for $N_{p,L}$ (resp. $\|\cdot\|_p$), we have, since $T_{L_p,L}(S')=S$ is unique,
$$
S=N_{p,L}^{-1}(\{1\})=(\|\cdot\|_p\circ T)^{-1}(\{1\}) \iff T(S)=\|\cdot\|_p^{-1}(\{1\})=S'\iff T=T_{L_p,L}^{-1}=T_{L,L_p}.
$$
We then easily compute, using Lemma \ref{lem:linear}, for all $(x,y)\in \R^2$ and $p\geq 1$, $p\neq \infty$,
$$
T_{L,L_p}(x,y)=\left(\frac{1}{u_1}x+\frac{1}{v_2}\left(\frac{1}{2}-\frac{v_1}{u_1} \right)y, \frac{1}{v_2}\left(1-\frac{1}{2^p} \right)^{\frac{1}{p}}y\right),\quad T_{L,L_\infty}(x,y)=\left( \frac{1}{u_1}x-\frac{v_1}{u_1v_2}y, \frac{1}{v_2}y\right),
$$
and we get the desired result by composition. Moreover, the asymptotic minimality of $L$ is a consequence of Theorem \ref{thm:main}.
\end{proof}
\begin{remark}[\textbf{Possible generalizations}]
We could obviously choose other norms by composing with the appropriate linear maps, e.g., writing all $N_{p,L}$'s expressions as $\|\cdot\|_\infty\circ T$ with a certain isomorphism $T$.
\end{remark}

\begin{example}[\textbf{Different ways to get a minimizing triangular patch}]
By Proposition \ref{prop:normlinear}, we see that our model can crystallize on a triangular lattice $\mathsf{A}_2$ in different ways. For instance:
\begin{itemize}
\item[•] by considering $\|\cdot\|_2$ for which the minimum of $\textnormal{E}_{\|\cdot\|_2}$ is directly a patch of $\mathsf{A}_2$ with the hexagonal basis of $H_N$ (see Figure \ref{fig:diffnorms}  on the left).
\item[•] by considering $N_{\infty,\mathsf{A}_2}$ for which the minimum of $\textnormal{E}_{N_{\infty,\mathsf{A}_2}}$ is also a patch of $\mathsf{A}_2$ but with a octagonal basis inherited from $Z_N$ (see Figure \ref{fig:diffnorms} on the right).
\end{itemize}
We therefore get two different configurations with a lower energy in the $N_{\infty,\mathsf{A}_2}$ case.

\begin{figure}[!h]
\centering
\includegraphics[width=8cm]{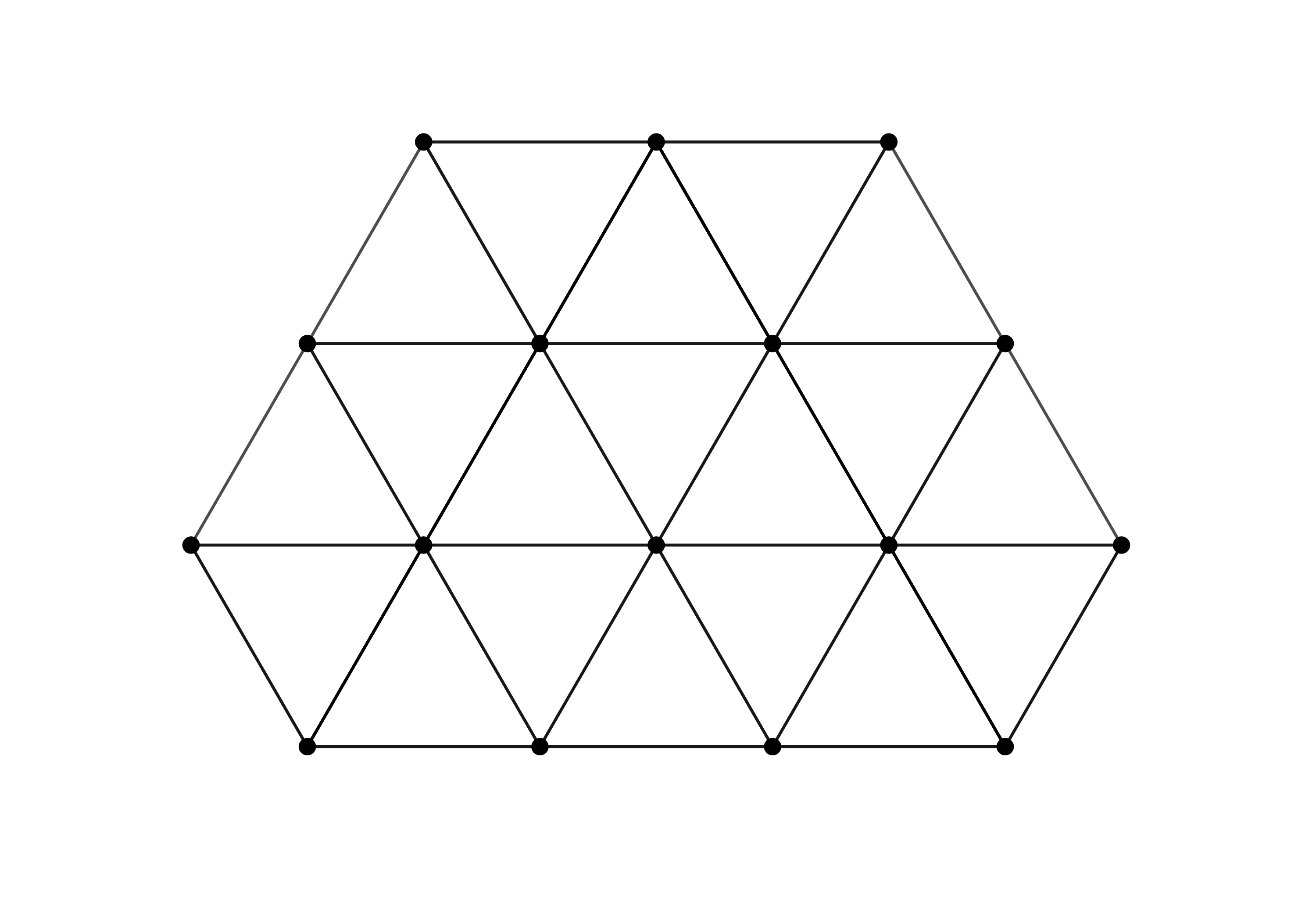}\quad \includegraphics[width=8cm]{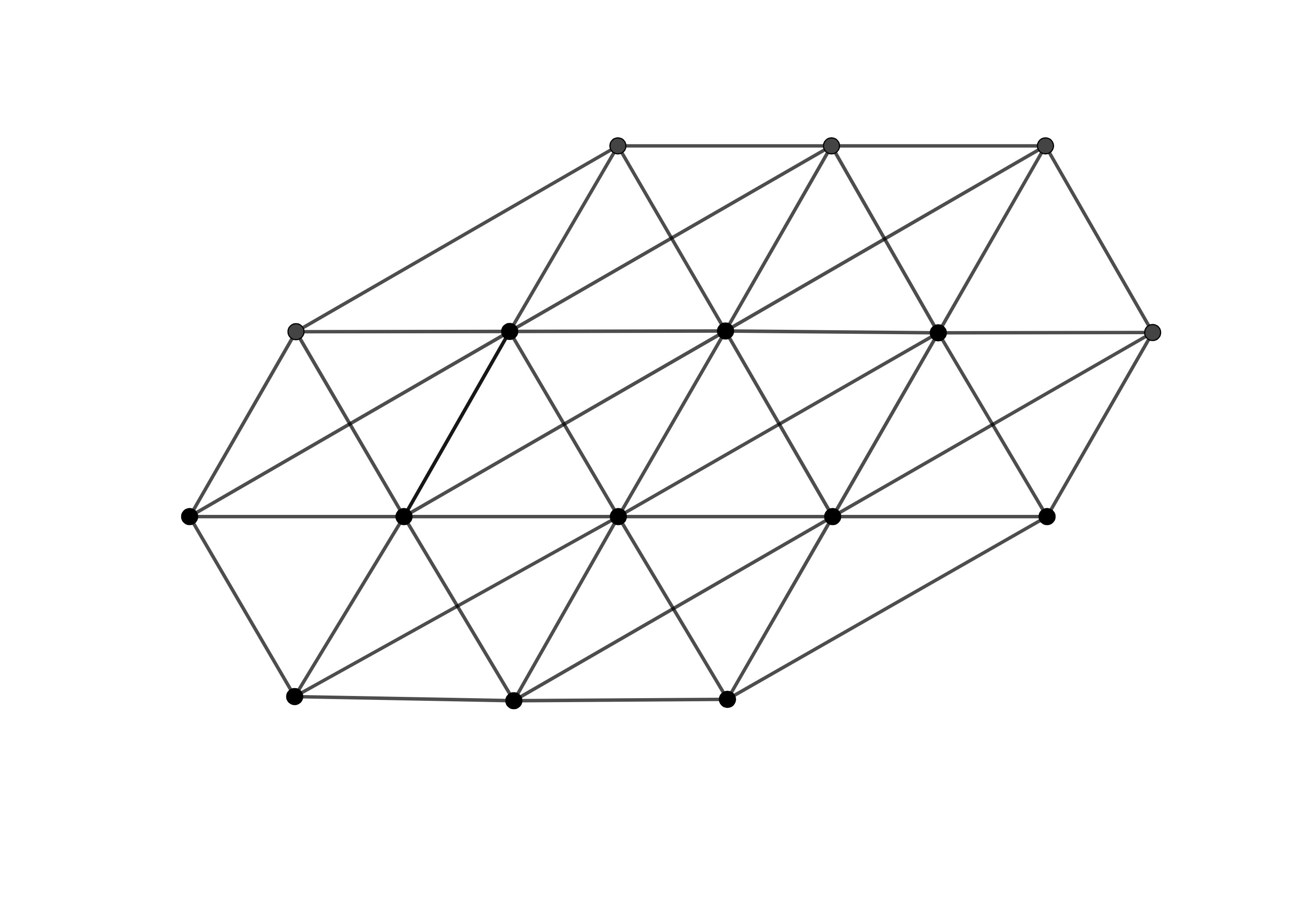}
\caption{Minimal configuration for $N=16$ for $\textnormal{E}_{\|\cdot\|_2}$ (on the left) and $\textnormal{E}_{N_{\infty,\mathsf{A}_2}}$ (on the right).}
\label{fig:diffnorms}
\end{figure}
\end{example}

\begin{remark}[\textbf{Open problems}]
In this paper, we do not investigate:
\begin{enumerate}
\item any short-range perturbation of $V_{\textnormal{HR}}$ as it was originally done by Radin in \cite{Rad3} or later by Au Yeung, Friesecke and Schmidt \cite{AuyeungFrieseckeSchmidt-2012} in the euclidean case. However, we strongly believe that Theorem \ref{thm:main} still holds -- but maybe not uniformly in $\|\cdot\|$ -- for perturbations of $V_{\textnormal{HR}}$ with strong repulsion before $1-\alpha$, a minimum equal to $-1$ at $r=1$, and the potential vanishing after $1+\beta$, with sufficiently small $\alpha,\beta>0$;
\item any properties of the Wulff shape, i.e., the minimizer of the macroscopic (continuous) energy corresponding to $\textnormal{E}_{\|\cdot\|}$ for fixed volume (see e.g., \cite{CicaleseLeonardi20}), for the potential $V_{\textnormal{HR}}$. This Wulff shape will probably be, in the $\|\cdot\|\in \mathcal{N}_6$ case, a linear transformation of the one found in \cite{AuyeungFrieseckeSchmidt-2012}, whereas the $\|\cdot\|\in \mathcal{N}_8$ case ask a thorough study and will probably yield a shape with octagonal symmetry.
\item any long-range perturbation of $V_{\textnormal{HR}}$ as it was done by Theil in \cite{Crystal} in the euclidean case. However, we again believe that an infinite family of long-range perturbations of $V_{\textnormal{HR}}$ can be constructed in such a way that an analogue of Theorem \ref{thm:main} holds.
\end{enumerate}
\end{remark}

\subsection{Connection with the soft potential $V_{\textnormal{BPD}}$ from \cite{BDLPSquare}}

As recalled in the introduction, the first author, de Luca and Petrache showed in \cite{BDLPSquare} the first crystallization results, in the sense of the thermodynamic limit, on $\Z^2$. Using the same strategy as in Theil's work \cite{Crystal}, the minimality of the square lattice for the two-body energy $\textnormal{E}_{\textnormal{BPD}}$ with potential $V=V_{\textnormal{BPD}}$ defined for $X_N\in (\R^2)^N$ by 
$$
\textnormal{E}_{\textnormal{BPD}}(X_N)=\frac{1}{2}\sum_{i=1 \atop i \neq j}^N\sum_{j=1}^N V_{\textnormal{BPD}}(\|x_i-x_j\|_2),\quad \forall r\geq 0, V_{\textnormal{BPD}}(r)=\left\{\begin{array}{ll}
+\infty & \mbox{if $r\in[0,1)$}\\
-1 & \mbox{if $r\in [1,\sqrt{2}]$}\\
0 & \mbox{if $r>\sqrt{2}$,}
\end{array}
\right.
$$
as well as some of its perturbations (short- and long-range) has been proven. In the finite $N$ case though, no crystallization result for this system is known. We can write the following result giving the minimal energy when the points are constrained to be on the square lattice.
\begin{prop}[\textbf{Minimality for $\textnormal{E}_{\textnormal{BPD}}$ on $\Z^2$}]
Let $N\in \N\backslash \{0,1\}$, then
$$
\min_{X_N\in (\Z^2)^N}\textnormal{E}_{\textnormal{BPD}}(X_N)=\mathcal{E}_{\|\cdot\|_\infty}(N)=-\lfloor4N-\sqrt{28N-12} \rfloor,
$$
achieved for $X_N=Z_N$.
\end{prop}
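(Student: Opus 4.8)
The plan is to identify the $V_{\textnormal{BPD}}$ energy restricted to $\Z^2$ with the $\|\cdot\|_\infty$ sticky-disk energy $\textnormal{E}_{\|\cdot\|_\infty}$, and then quote Theorem \ref{thm:main}(2). The whole content is the elementary distance identification on the integer lattice: for distinct $x,y\in\Z^2$ one has $\|x-y\|_2\in[1,\sqrt2]$ if and only if $\|x-y\|_\infty=1$. Indeed $x-y$ is a nonzero integer vector, so $\|x-y\|_2^2\in\{1,2\}$ precisely when $\|x-y\|_2\in[1,\sqrt2]$, and the eight integer vectors with $\|x-y\|_2^2\in\{1,2\}$, namely $(\pm1,0)$, $(0,\pm1)$, $(\pm1,\pm1)$, are exactly the eight integer vectors with $\|x-y\|_\infty=1$.

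First I would dispose of the degenerate case: if the entries of $X_N\in(\Z^2)^N$ are not pairwise distinct, then $\textnormal{E}_{\textnormal{BPD}}(X_N)=+\infty$, so the inequality $\textnormal{E}_{\textnormal{BPD}}(X_N)\ge\mathcal{E}_{\|\cdot\|_\infty}(N)$ is trivial and such configurations cannot be minimizers. So it suffices to argue for $X_N\in(\Z^2)^N$ with distinct entries. For such $X_N$ every pair $\{x_i,x_j\}$ satisfies $\|x_i-x_j\|_2\ge1$ and $\|x_i-x_j\|_\infty\ge1$, so neither potential ever hits $+\infty$; moreover, by the identification above, $V_{\textnormal{BPD}}(\|x_i-x_j\|_2)$ and $V_{\textnormal{HR}}(\|x_i-x_j\|_\infty)$ are both equal to $-1$ when $\|x_i-x_j\|_\infty=1$ and both equal to $0$ otherwise. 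Summing over ordered pairs $i\neq j$ and dividing by $2$ gives the term-by-term identity $\textnormal{E}_{\textnormal{BPD}}(X_N)=\textnormal{E}_{\|\cdot\|_\infty}(X_N)$ for all distinct $X_N\in(\Z^2)^N$.

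The lower bound then follows at once: since $X_N\subset\R^2$,
$$
\textnormal{E}_{\textnormal{BPD}}(X_N)=\textnormal{E}_{\|\cdot\|_\infty}(X_N)\ge\min_{Y_N\subset\R^2}\textnormal{E}_{\|\cdot\|_\infty}(Y_N)=\mathcal{E}_{\|\cdot\|_\infty}(N),
$$
and Theorem \ref{thm:main}(2), applicable because $\|\cdot\|_\infty\in\mathcal N_8$ (its unit sphere is a square), gives $\mathcal{E}_{\|\cdot\|_\infty}(N)=-\lfloor4N-\sqrt{28N-12}\rfloor$. For the matching upper bound and the identification of the minimizer, I would recall that the configuration $Z_N$ of Theorem \ref{thm:main}(2) lies in $\Z^2$ by construction, has distinct points, and realizes $\textnormal{E}_{\|\cdot\|_\infty}(Z_N)=\mathcal{E}_{\|\cdot\|_\infty}(N)$; applying the term-by-term identity a second time yields $\textnormal{E}_{\textnormal{BPD}}(Z_N)=\mathcal{E}_{\|\cdot\|_\infty}(N)$. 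Hence the minimum over $(\Z^2)^N$ equals $\mathcal{E}_{\|\cdot\|_\infty}(N)=-\lfloor4N-\sqrt{28N-12}\rfloor$ and is attained at $X_N=Z_N$.

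There is essentially no serious obstacle here: the argument is a translation of the $V_{\textnormal{BPD}}$ problem on $\Z^2$ into the already-solved $\|\cdot\|_\infty$ sticky-disk problem, using that Theorem \ref{thm:main}(2) already furnishes an \emph{unconstrained} minimizer sitting inside $\Z^2$, so that restricting to $\Z^2$ costs nothing. The only points requiring a little care are to check that the $+\infty$ branch of $V_{\textnormal{BPD}}$ is genuinely vacuous for distinct integer points (true, since the minimal distance in $\Z^2$ is $1$) and to phrase the coincident-points case so that the desired inequality holds there trivially.
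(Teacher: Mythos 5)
Your proof is correct and follows essentially the same route as the paper's: identify $V_{\textnormal{BPD}}(\|x-y\|_2)$ with $V_{\textnormal{HR}}(\|x-y\|_\infty)$ pointwise on $\Z^2$ (since for distinct integer points $\|x-y\|_2\in[1,\sqrt 2]$ iff $\|x-y\|_\infty=1$), then use the fact that the unconstrained $\|\cdot\|_\infty$ minimizer $Z_N$ already lies in $\Z^2$. You are somewhat more careful than the paper about the coincident-point case and the explicit distance dictionary, and you cite Theorem \ref{thm:main}(2) where the paper cites Corollary \ref{cor:pnorms}, but these are the same content.
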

\begin{proof}
If we consider only points on $\Z^2$, it is clear that the pair distances giving a non-zero finite energy are exactly $1$ and $\sqrt{2}$, which allows us to say that minimizing $\textnormal{E}_{\textnormal{BPD}}$ on $\Z^2$ is equivalent to minimizing $\textnormal{E}_{\|\cdot\|_\infty}$ on $\Z^2$, which gives exactly the same minimizer and minimal energy as in the general case on $\R^2$. Therefore, using Corollary \ref{cor:pnorms}, the result is proved.
\end{proof}

Furthermore, it has been shown in \cite[Theorem 2.1]{BDLPSquare} that a vertex in the minimal-distance graph of a minimal configuration has maximum degree 8 and, if this number is achieved, then the configuration composed by these 9 points (the point and its 8 neighbors) is, up to an isometry, a subset of $\Z^2$. Since we believe that a minimizing configuration for $\textnormal{E}_{\textnormal{BPD}}$ has to have the maximum number of (interior) points with 8 neighbors, we conjecture that a minimizer (maybe not all of them?) must be a subset of $\Z^2$ and that the following statement should hold.

\begin{Conjecture}[\textbf{Minimality for $\textnormal{E}_{\textnormal{BPD}}$ in general}]
For all $N\in \N\backslash \{0,1\}$, we have
$$
\min_{X_N\in (\R^2)^N}\textnormal{E}_{\textnormal{BPD}}(X_N)=\mathcal{E}_{\|\cdot\|_\infty}(N)=-\lfloor4N-\sqrt{28N-12} \rfloor,
$$
achieved in particular for $X_N=Z_N$.
\end{Conjecture}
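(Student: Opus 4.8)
Since the final statement is a conjecture, I describe an attack rather than a complete argument.

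\textbf{Reformulation.} Exactly as in the proof of Theorem \ref{thm:main}, minimizing $\textnormal{E}_{\textnormal{BPD}}$ over $(\R^2)^N$ is equivalent to maximizing, among configurations $X_N$ with $d_2(X_N):=\min_{i\neq j}\|x_i-x_j\|_2\geq 1$, the number of \emph{bonds} $b(X_N):=\#\{\{i,j\} : 1\leq\|x_i-x_j\|_2\leq\sqrt 2\}$. By the Proposition above, $Z_N\subset\Z^2$ attains $b(Z_N)=\lfloor 4N-\sqrt{28N-12}\rfloor$, because on $\Z^2$ the pairs at euclidean distance in $[1,\sqrt 2]$ are precisely the pairs at $\|\cdot\|_\infty$-distance $1$. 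So the conjecture amounts to the sharp upper bound $b(X_N)\leq\lfloor 4N-\sqrt{28N-12}\rfloor$ for every admissible $X_N$, together with the identification of the maximizers with isometric copies of $Z_N$.

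\textbf{Rigidity and propagation onto $\Z^2$.} Fix a maximizer $X_N$; its existence follows from the usual compactness argument once one notes a maximizer may be assumed connected in the bond graph, hence of bounded diameter, as in \cite{BDLPSquare}. By \cite[Theorem 2.1]{BDLPSquare}, every vertex of the bond graph has degree at most $8$, and whenever a vertex has degree $8$ the subconfiguration formed by it and its $8$ bonded neighbours is, up to isometry, a $3\times 3$ block of $\Z^2$. The core of the argument would be a discrete isoperimetric (edge-counting) estimate showing that maximality of $b(X_N)$ forces all but $O(\sqrt N)$ of the points to have degree $8$: each vertex of degree $<8$ is a boundary or defect vertex carrying a deficit, the total deficit must be at least of order $\sqrt{28N-12}$, and any extra deficit strictly decreases $b(X_N)$. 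Two bonded degree-$8$ vertices share at least four (generically six) points of their rigid stars — enough, since three non-collinear points fix an affine frame — so their stars lie in a common copy of $\Z^2$; running this along a connected subgraph of degree-$8$ vertices then places the bulk of $X_N$ on a single copy $L\cong\Z^2$.

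\textbf{Conclusion.} Once a large connected core lies on $L$, the remaining boundary points are pinned down as well: a point bonded to two suitably placed core points cannot sit off $L$ without violating $d_2\geq 1$ or creating a bond longer than $\sqrt 2$, so $X_N\subset L$ entirely. Then $b(X_N)$ equals the number of $\|\cdot\|_\infty$-unit pairs among a subset of $\Z^2$, and Brass' solution of the $\|\cdot\|_\infty$ minimal-distance problem — equivalently Corollary \ref{cor:pnorms}(2) together with the Proposition above — gives $b(X_N)\leq\lfloor 4N-\sqrt{28N-12}\rfloor$, with equality exactly when $X_N$ is an isometric copy of $Z_N$, in particular $Z_N$ itself.

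\textbf{Main obstacle.} The hard part is the edge-counting estimate of the middle step. In the Heitmann--Radin/Harborth setting the analogous bound rests on Euler's formula for the \emph{planar} minimal-distance graph, but here the bond graph of $\Z^2$ is not planar — the two diagonals of each unit square cross — so the classical argument does not transfer. One would need either to isolate a planar sub-network carrying the essential count (for instance the $\|\cdot\|_\infty$-unit edges) and control the remaining diagonal bonds by a separate combinatorial bound, or to prove a genuinely new isoperimetric inequality for bond graphs whose edge lengths range over $[1,\sqrt 2]$. A secondary difficulty is the propagation step near the boundary, where the rigid stars of adjacent degree-$8$ vertices overlap only partially, so some care is needed to keep the lattice consistent and to exclude the possibility that the degree-$8$ vertices split into several components sitting on different copies of $\Z^2$.
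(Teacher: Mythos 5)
The statement you are addressing is stated in the paper as a \emph{conjecture}: the authors give no proof of it, only the heuristic motivation that, by Theorem~2.1 of \cite{BDLPSquare}, a vertex of the bond graph has degree at most $8$ and a degree-$8$ star is rigidly a $3\times 3$ block of $\Z^2$, so a minimizer ``should'' maximize the number of such interior points and hence sit on $\Z^2$. Your reformulation (minimizing $\textnormal{E}_{\textnormal{BPD}}$ is maximizing the number of pairs at euclidean distance in $[1,\sqrt2]$ subject to $d_2(X_N)\geq 1$, and on $\Z^2$ these pairs are exactly the $\|\cdot\|_\infty$-unit pairs) is correct and is precisely the observation the paper uses to prove the \emph{restricted} version on $(\Z^2)^N$. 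Your proposed attack --- degree bound plus rigidity of degree-$8$ stars, an isoperimetric deficit estimate forcing all but $O(\sqrt N)$ vertices to have degree $8$, propagation of the local $\Z^2$ frame, then Brass' count to finish --- is a sensible and essentially canonical strategy, and you correctly identify the genuine obstruction: the Heitmann--Radin/Harborth edge count rests on planarity of the minimal-distance graph, which fails here because the two diagonals of each unit cell cross. This is exactly why the statement remains open in the paper.

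To be clear, though, what you have written is a program, not a proof: the entire quantitative content of the conjecture lives in the middle step (the sharp bound $b(X_N)\leq\lfloor 4N-\sqrt{28N-12}\rfloor$ for configurations \emph{not} assumed to lie on a lattice), and nothing in your sketch supplies it. You acknowledge this honestly, so there is no error to flag, but note two further points of care if you pursue it. First, even granting that most vertices have degree $8$ and that the bulk lies on one copy of $\Z^2$, the sharp constant $\sqrt{28N-12}$ in the boundary deficit must come out of the argument exactly, not just to leading order; an $O(\sqrt N)$ estimate with the wrong constant proves a weaker thermodynamic-limit statement (which is already in \cite{BDLPSquare}) but not the finite-$N$ conjecture. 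Second, your claim that equality forces $X_N$ to be an isometric copy of $Z_N$ overstates even the conjectured conclusion: the paper deliberately asserts only that the minimum is \emph{achieved in particular} by $Z_N$, and its authors explicitly refrain from any uniqueness claim (even in the easier $V_{\textnormal{HR}}$, $\|\cdot\|_\infty$ setting) because the classification of maximizers of the $\|\cdot\|_\infty$ minimal-distance count is not known to single out $Z_N$ up to isometry.
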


\begin{remark}[\textbf{Rigidity and minimizers}]
Let us add that, in contrast to the $\|\cdot\|_\infty$ case where there exists a certain rigidity (the energy selects only the lattice $\Z^2$, up to translation), the isotropy assumption in the BDP case, all rotations of $\Z^2$ are potential energy minimizers. One of the main challenges in that context lies in ruling out polycrystalline structures as energetically favorable as for instance in \cite{FKS2020} in the triangular lattice case.
\end{remark}

\section{Lennard-Jones problem among lattices with an arbitrary norm}\label{sec:LJ}
The goal of this part is to consider a classical example of attractive-repulsive potential widely used in molecular simulation and known to be a good approximation for illustrating some physical or chemical behaviors (see \cite[Section 6.3]{BetTheta15}): the Lennard-Jones potential 
$$
V_{\textnormal{LJ}}:\R_+^*\to \R, \quad r\mapsto V_{\textnormal{LJ}}(r)=\displaystyle\frac{1}{r^{12}}-\frac{2}{r^6}.
$$
In this paper, we do not cover the whole family of Lennard-Jones type potentials written as $r\mapsto \frac{a}{r^{p}}-\frac{b}{r^q}$, $(a,b)\in (\R_+^*)^2$, $p>q>2$ since the euclidean norm case seems to show a uniform behavior for the minimizer of the corresponding energy (see \cite{BetTheta15}). Contrary to the very simple attractive-repulsive Heitmann-Radin sticky disk potential, absolutely no rigorous crystallization result is available in the two-dimensional euclidean norm case, whereas many results are known in dimension one \cite{VN1,Rad1}. An interesting -- but still difficult -- object to study is the Lennard-Jones energy per point for lattices (i.e., the interaction between the origin and all the lattice points) for which many results are known \cite{BetTheta15,LBbonds21,BDefects20,LuoWei22,LBComputerLJ23,SunWeiZou24} in the euclidean case.

\medskip

For all $L\in \mathcal{L}_2$, we define
$$
\textnormal{E}_{\|\cdot\|}^{\textnormal{LJ}}[L]:=\sum_{p\in L\backslash \{0\}}\left(\frac{1}{\|p\|^{12}}-\frac{2}{\|p\|^6}  \right)=\zeta_{L,\|\cdot\|}(12)-2\zeta_{L,\|\cdot\|}(6),
$$
where the Epstein zeta function, with parameter $s$, associated to the norm $\|\cdot\|$ is defined by
$$
\forall s>2,\quad \zeta_{L,\|\cdot\|}(s):=\sum_{p\in L\backslash \{0\}} \frac{1}{\|p\|^s},
$$
which is again an energy per point, associated to the potential $V_{s}:\R_+^*\to \R$, $r\mapsto V_{s}(r)=\frac{1}{r^s}$.\\
Notice that if, for all $N\in \N^*$, $L_N\subset L$ is a lattice patch with $N$ points such that $L_N\to L$ as $N\to +\infty$ (for instance $L_N=K_N\cap L$ with a sequence of compact sets $K_N$ such that $\sharp L_N=N$ and $K_N\to \R^2$), then, respectively for $V=V_{\textnormal{LJ}}$ and $V=V_s$ with $\textnormal{E}_{\|\cdot\|}$ defined again by \eqref{eq:Enorm},
$$
\lim_{N\to +\infty} \frac{E_{\|\cdot\|}(L_N)}{N}=2\textnormal{E}_{\|\cdot\|}^{\textnormal{LJ}}[L],\quad  \lim_{N\to +\infty} \frac{E_{\|\cdot\|}(L_N)}{N}=2\zeta_{L,\|\cdot\|}(s).
$$
These double sums are absolutely convergent -- by equivalences of all norms in $\R^2$ and since this is true for the euclidean one. Furthermore, the Epstein zeta function satisfies the following homogeneity formula:
\begin{equation}\label{eq:zetahom}
\forall s>2,\quad \forall \lambda>0,\quad\zeta_{\lambda L,\|\cdot\|}(s)=\lambda^{-s}\zeta_{L,\|\cdot\|}(s).
\end{equation}

The following result, mainly based on \eqref{eq:zetahom}, is similar to the one shown in \cite[Theorem 2.6]{LBComputerLJ23} (see also \cite[Thm. 1.11]{OptinonCM}) and can be generalized to any Lennard-Jones type potential. It allows us to simplify the problem of minimizing $\textnormal{E}_{\|\cdot\|}^{\textnormal{LJ}}$ on $\mathcal{L}_2$ into the minimization of another lattice energy on the smaller space $\mathcal{L}_2(1)$.

\begin{prop}[\textbf{Minimal Lennard-Jones energy among dilated lattices}] \label{prop:LJ}
 For all $L\in \mathcal{L}_2(1)$, 
$$
 e_{\|\cdot\|}(L):=\min_{\lambda>0} \textnormal{E}_{\|\cdot\|}^{\textnormal{LJ}}[\lambda L]=\textnormal{E}_{\|\cdot\|}^{\textnormal{LJ}}[\lambda_L L]=-\frac{\zeta_{ L,\|\cdot\|}^2(6)}{\zeta_{L,\|\cdot\|}(12)}, \quad \textnormal{where}\quad \lambda_L:=\left(\frac{\zeta_{L,\|\cdot\|}(12)}{\zeta_{L,\|\cdot\|}(6)} \right)^{\frac{1}{6}}.
$$
\end{prop}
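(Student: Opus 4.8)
The plan is to reduce the minimization over the one-parameter family $\{\lambda L : \lambda > 0\}$ to a one-variable calculus problem, exploiting the homogeneity formula \eqref{eq:zetahom}. First I would fix $L \in \mathcal{L}_2(1)$ and use \eqref{eq:zetahom} to write, for every $\lambda > 0$,
$$
\textnormal{E}_{\|\cdot\|}^{\textnormal{LJ}}[\lambda L] = \zeta_{\lambda L,\|\cdot\|}(12) - 2\zeta_{\lambda L,\|\cdot\|}(6) = \lambda^{-12}\zeta_{L,\|\cdot\|}(12) - 2\lambda^{-6}\zeta_{L,\|\cdot\|}(6).
$$
Setting $t := \lambda^{-6} > 0$, this becomes the quadratic $f(t) = \zeta_{L,\|\cdot\|}(12)\, t^2 - 2\zeta_{L,\|\cdot\|}(6)\, t$ in the variable $t$.

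Next I would minimize $f$ over $t > 0$. Since the two Epstein-type sums are absolutely convergent (as noted in the excerpt, by equivalence of norms on $\R^2$) and strictly positive, the leading coefficient $\zeta_{L,\|\cdot\|}(12)$ is positive, so $f$ is a strictly convex parabola; its unique minimizer is at $t_L = \zeta_{L,\|\cdot\|}(6)/\zeta_{L,\|\cdot\|}(12) > 0$, which indeed lies in the admissible range $t > 0$. Substituting back gives
$$
f(t_L) = \zeta_{L,\|\cdot\|}(12)\cdot\frac{\zeta_{L,\|\cdot\|}^2(6)}{\zeta_{L,\|\cdot\|}^2(12)} - 2\cdot\frac{\zeta_{L,\|\cdot\|}^2(6)}{\zeta_{L,\|\cdot\|}(12)} = -\frac{\zeta_{L,\|\cdot\|}^2(6)}{\zeta_{L,\|\cdot\|}(12)}.
$$
Finally, unwinding the substitution, $t_L = \lambda_L^{-6}$ yields $\lambda_L = (\zeta_{L,\|\cdot\|}(12)/\zeta_{L,\|\cdot\|}(6))^{1/6}$, matching the claimed formula, and $e_{\|\cdot\|}(L) = \textnormal{E}_{\|\cdot\|}^{\textnormal{LJ}}[\lambda_L L] = f(t_L)$.

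There is essentially no hard obstacle here: the only points requiring a word of care are that $\zeta_{L,\|\cdot\|}(s) > 0$ (immediate, since every summand is positive) so that the parabola opens upward and the critical point is a genuine minimum, and that the map $\lambda \mapsto t = \lambda^{-6}$ is a bijection from $(0,\infty)$ to $(0,\infty)$, so minimizing over $\lambda$ and over $t$ are equivalent. One could alternatively differentiate $\lambda \mapsto \textnormal{E}_{\|\cdot\|}^{\textnormal{LJ}}[\lambda L]$ directly in $\lambda$ and check the sign of the derivative, but the substitution $t = \lambda^{-6}$ makes the computation transparent and the convexity argument replaces any second-derivative check. The statement and its proof are, as the authors note, a verbatim adaptation of \cite[Theorem 2.6]{LBComputerLJ23} with $\|\cdot\|_2$ replaced by an arbitrary norm, the only structural input being the homogeneity \eqref{eq:zetahom}, which holds for any norm.
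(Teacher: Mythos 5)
Your proof is correct and follows essentially the same route as the paper's: both reduce to the one-variable problem $\lambda\mapsto \lambda^{-12}\zeta_{L,\|\cdot\|}(12)-2\lambda^{-6}\zeta_{L,\|\cdot\|}(6)$ via the homogeneity formula \eqref{eq:zetahom} and locate the unique minimizer, the paper by checking the sign of the $\lambda$-derivative and you by the equivalent substitution $t=\lambda^{-6}$ turning it into an upward parabola. The convexity observation is a slightly cleaner way to certify the minimum, but the content is identical.
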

\begin{proof}
For all $\lambda>0$, and all $L\in \mathcal{L}_2(1)$, \eqref{eq:zetahom} implies that
$$
\textnormal{E}_{\|\cdot\|}^{\textnormal{LJ}}[\lambda L]=\zeta_{\lambda L,\|\cdot\|}(12)-2\zeta_{\lambda L,\|\cdot\|}(6)=\lambda^{-12}\zeta_{L,\|\cdot\|}(12)-2\lambda^{-6}\zeta_{L,\|\cdot\|}(6),
$$
and therefore,
$$
\frac{d}{d\lambda }\textnormal{E}_{\|\cdot\|}^{\textnormal{LJ}}[\lambda L]\geq 0 \iff -12\lambda^{-13}\zeta_{L,\|\cdot\|}(12)+12\lambda^{-7}\zeta_{L,\|\cdot\|}(6)\geq 0 \iff \lambda\geq \lambda_L=\left(\frac{\zeta_{L,\|\cdot\|}(12)}{\zeta_{L,\|\cdot\|}(6)} \right)^{\frac{1}{6}}.
$$
This proves that the minimum of $\lambda\mapsto \textnormal{E}_{\|\cdot\|}^{\textnormal{LJ}}[\lambda L]$ is achieved at $\lambda=\lambda_L$. The value of this minimum is given by 
\begin{align*}
 e_{\|\cdot\|}(L)=\textnormal{E}_{\|\cdot\|}^{\textnormal{LJ}}[\lambda_L L]&=\left(\frac{\zeta_{L,\|\cdot\|}(12)}{\zeta_{L,\|\cdot\|}(6)} \right)^{-2}\zeta_{ L,\|\cdot\|}(12)-2\left(\frac{\zeta_{L,\|\cdot\|}(12)}{\zeta_{L,\|\cdot\|}(6)} \right)^{-1}\zeta_{ L,\|\cdot\|}(6)=-\frac{\zeta_{ L,\|\cdot\|}^2(6)}{\zeta_{L,\|\cdot\|}(12)}.
\end{align*}
\end{proof}

In the euclidean norm case, the methods to find the minimizer $\lambda_{\mathsf{A}_2} \mathsf{A}_2$ of the Lennard-Jones energy in $\mathcal{L}_2$ are all based on the minimality of $\mathsf{A}_2$ in $\mathcal{L}_2(1)$ proved in \cite{Rankin,Cassels,Eno2,Diananda,Mont} for $\zeta_{L,\|\cdot\|_2}(s)$ for all $s>2$ (see \cite{LuoWei22,LBComputerLJ23}). In particular, the first author proved in \cite{LBComputerLJ23} the optimality of $\sqrt{\frac{2}{\sqrt{3}}}\mathsf{A}_2$ for $e_{\|\cdot\|_2}$, using a computer-assisted proof which actually allows us to generalize this result to other chosen exponents (rather than 12 and 6). For the general norm case, it is unclear what lattice minimizes $\zeta_{L,\|\cdot\|}(s)$ in $\mathcal{L}_2(1)$. As a first incursion in this new minimization problem, we propose a very simple numerical study of the Epstein zeta function $\zeta_{L,\|\cdot\|_p}(s)$, $s\in \{6,12\}$, associated to some of the $p$-norms $\|\cdot\|_p$, on $\mathcal{L}_2(1)$, as well as the same question for the Lennard-Jones energy $\textnormal{E}_{\|\cdot\|_p}^{\textnormal{LJ}}$ in $\mathcal{L}_2$, which is the same, by Proposition \ref{prop:LJ}, as minimizing $e_{\|\cdot\|_p}$ in $\mathcal{L}_2(1)$.

\medskip

We therefore parametrize any $L\in \mathcal{L}_2(1)$ by $(x,y)$ in the half-fundamental domain $\mathcal{D}$ defined by \eqref{eq:D} which allows us, via \eqref{eq:Lxy}, to write, for any norm $\|\cdot\|$ on $\R^2$ and any $s>2$,
$$
\zeta_{L,\|\cdot\|}(s)=\sum_{m,n} \frac{1}{\displaystyle\left\|\left( \frac{m+nx}{\sqrt{y}}, n\sqrt{y}\right)  \right\|^s},\quad e_{\|\cdot\|}(L):=\min_{\lambda>0} \textnormal{E}_{\|\cdot\|}^{\textnormal{LJ}}[\lambda L]=-\frac{\left(\displaystyle\sum_{m,n} \left\|\left( \frac{m+nx}{\sqrt{y}}, n\sqrt{y}\right)  \right\|^{-6} \right)^2}{\displaystyle\sum_{m,n} \left\|\left( \frac{m+nx}{\sqrt{y}}, n\sqrt{y}\right)  \right\|^{-12}},
$$
where all these summations are for $(m,n)\in \Z^2\backslash \{(0,0)\}$. We recall (see Remark \ref{rem:D}) that in this parametrization, the square lattice $\Z^2$ is parametrized by $(0,1)$ and the triangular lattice $\sqrt{\frac{2}{\sqrt{3}}}\mathsf{A}_2$ (where the coefficient $\sqrt{\frac{2}{\sqrt{3}}}$ ensures the density to be one) is parametrized by $\left(\frac{1}{2},\frac{\sqrt{3}}{2}\right)\approx(0.5,0.8660254)$ in $\mathcal{D}$.
\medskip

Using at the same time the Nelder-Mead minimization algorithm and the contour plots (in order to avoid finding only local minima) -- see for instance Figure \ref{fig:contour} for examples -- of the above functions, we got the following observations.

\medskip

• \textbf{Numerical results for the Epstein zeta function.} For the problem of minimizing $L\mapsto \zeta_{L,\|\cdot\|_p}(s)$ in $\mathcal{L}_2(1)$ for $s\in \{6,12\}$ we found the existence of $p_2>p_1>1$ such that:
\begin{itemize}
\item[-] if $p\in [1,2]$, the minimizer is $\sqrt{\frac{2}{\sqrt{3}}}\mathsf{A}_2$,
\item[-]  if $p\in (2,p_1)$, the minimizer is parametrized by a point of the form $(1/2,y_p)$ where $p\mapsto y_p$ is increasing from $\sqrt{3}/2$ to $\approx 0.97$,
\item[-]  if $p\in (p_1,p_2)$, the minimizer is $\Z^2$,
\item[-]  if  $p>p_2$, the minimizer is parametrized by a point of the form  $(1/2,y_p)$ where $p\mapsto y_p$ is decreasing from $\approx 1.097$ and going to $y_\infty\approx 1.095$ as $p\to +\infty$.
\end{itemize}
The value of $(p_1,p_2)$ belongs to $(2.4,2.5)\times (5.4,5.5)$ for $s=6$ and to $(2.7,2.8)\times (8.9, 9)$ for $s=12$.\\

The above phase transition of the minimizer with respect to the $p$-norm is at the same time intriguing and new even though other numerical \cite{C0SM01205J} and theoretical \cite[Prop. 3.4]{BetTheta15} results have been given in the past concerning the optimality of non-triangular lattices for purely repulsive convex potentials. Furthermore, except for $p=2$, it seems that $L_p$ does not play any role in this minimization process. We find the triangular phase $p\in [1,2)$ as well as the square one $p\in (p_1,p_2)$ really unexpected.

\medskip

• \textbf{Numerical results for the reduced Lennard-Jones energy.} For the problem of minimizing $L\mapsto e_{\|\cdot\|_p}(L)$ in $\mathcal{L}_2(1)$, we found the existence of $p_1\in (1.2,1.3)$, $p_2\in (3.6,3.7)$ and $p_3\in (3.8,3.9)$ such that:
\begin{enumerate}
\item[-]  if $p=1$, the minimizer is $\Z^2$,
\item[-]  if $p\in (1,p_1)$, the minimizer is parametrized by a point of the form $(\varepsilon_p,\sqrt{1-\varepsilon_p^2})$ for very small $\varepsilon_p$ such that $p\mapsto \varepsilon_p$ increases,
\item[-]  if $p\in (p_1,2]$, the minimizer is $\sqrt{\frac{2}{\sqrt{3}}}\mathsf{A}_2$,
\item[-]  if $p\in (2,p_2)$, the minimizer is parametrized by a point of the form $(1/2,y_p)$ where $p\mapsto y_p$ is increasing from $\sqrt{3}/2$ to $\approx 0.99$,
\item[-]  if $p\in (p_2,p_3)$, the minimizer is parametrized by a point of the form  $(x_p, y_p)$ where $p\mapsto x_p$ is decreasing from $1/2$ to $0$ and $p\mapsto y_p$ is increasing from $0.99$ to $1$,
\item[-]  $p\in (p_3,\infty]$, the minimizer is $\Z^2$.
\end{enumerate}

This type of phase transition with respect to the norm is again new and interesting. Whereas the $p\in \{1,2,\infty\}$ cases are similar to what we showed in Corollary \ref{cor:pnorms}, i.e., a multiple of $L_p$ is optimal, the other cases are quite surprising. For the latter, the local and global effect of the Lennard-Jones potential seems to perturb the local effect of the norm (given by Corollary \ref{cor:pnorms}). Notice that an attractive-repulsive potential having not a triangular lattice as global minimizer (but a square one) has been designed by the first author and Petrache in \cite[Prop. 1.17]{OptinonCM}.

\medskip

\begin{figure}[!h]
\centering
\includegraphics[width=8cm]{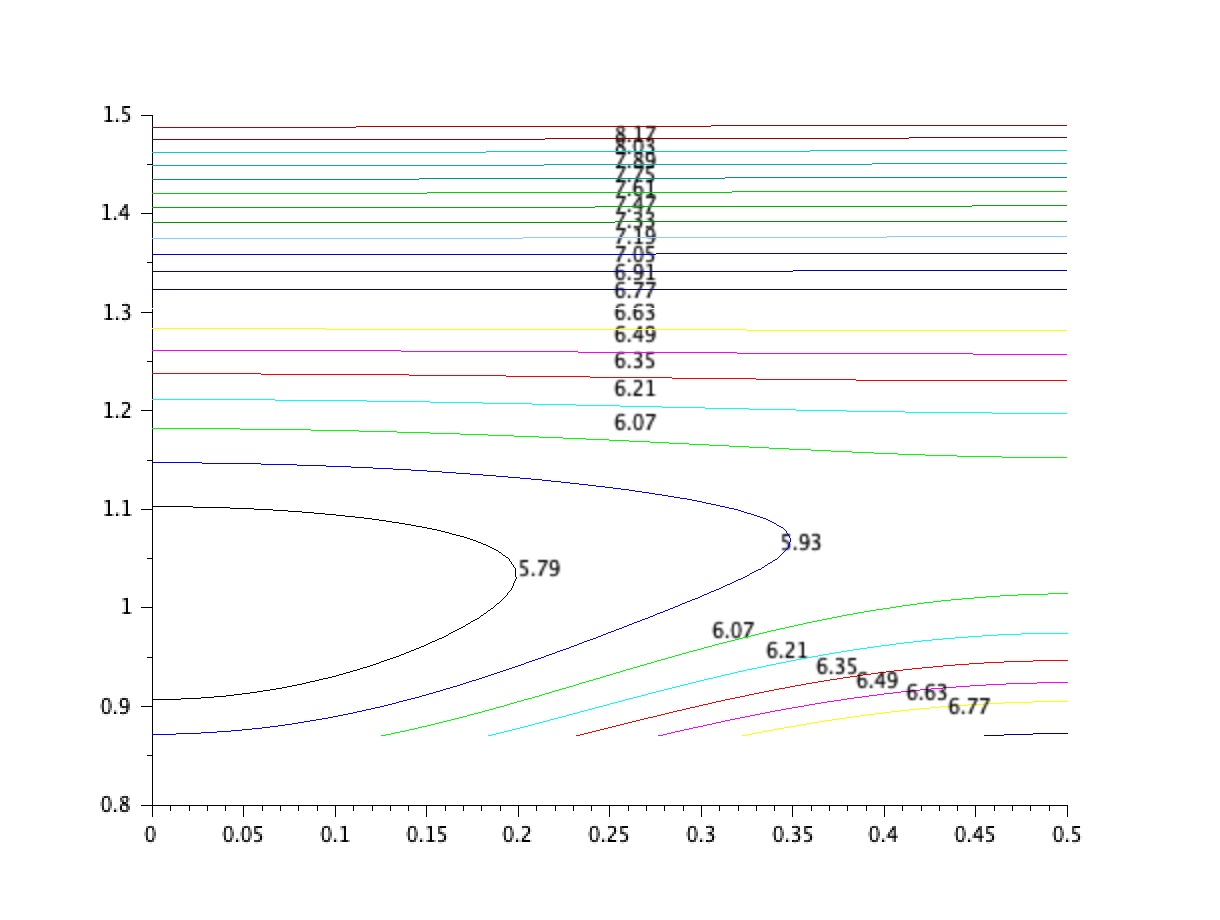}\quad  \includegraphics[width=8cm]{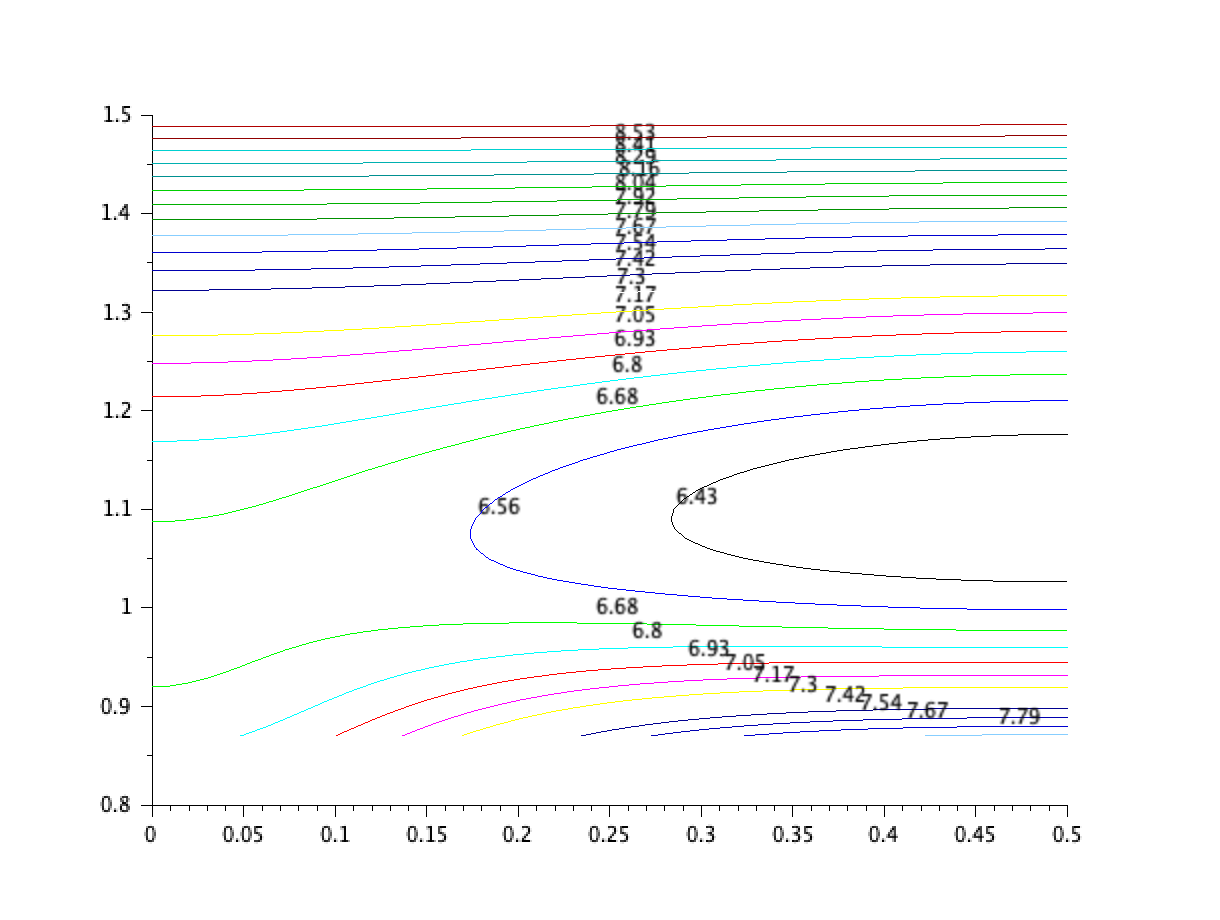} \quad \includegraphics[width=8cm]{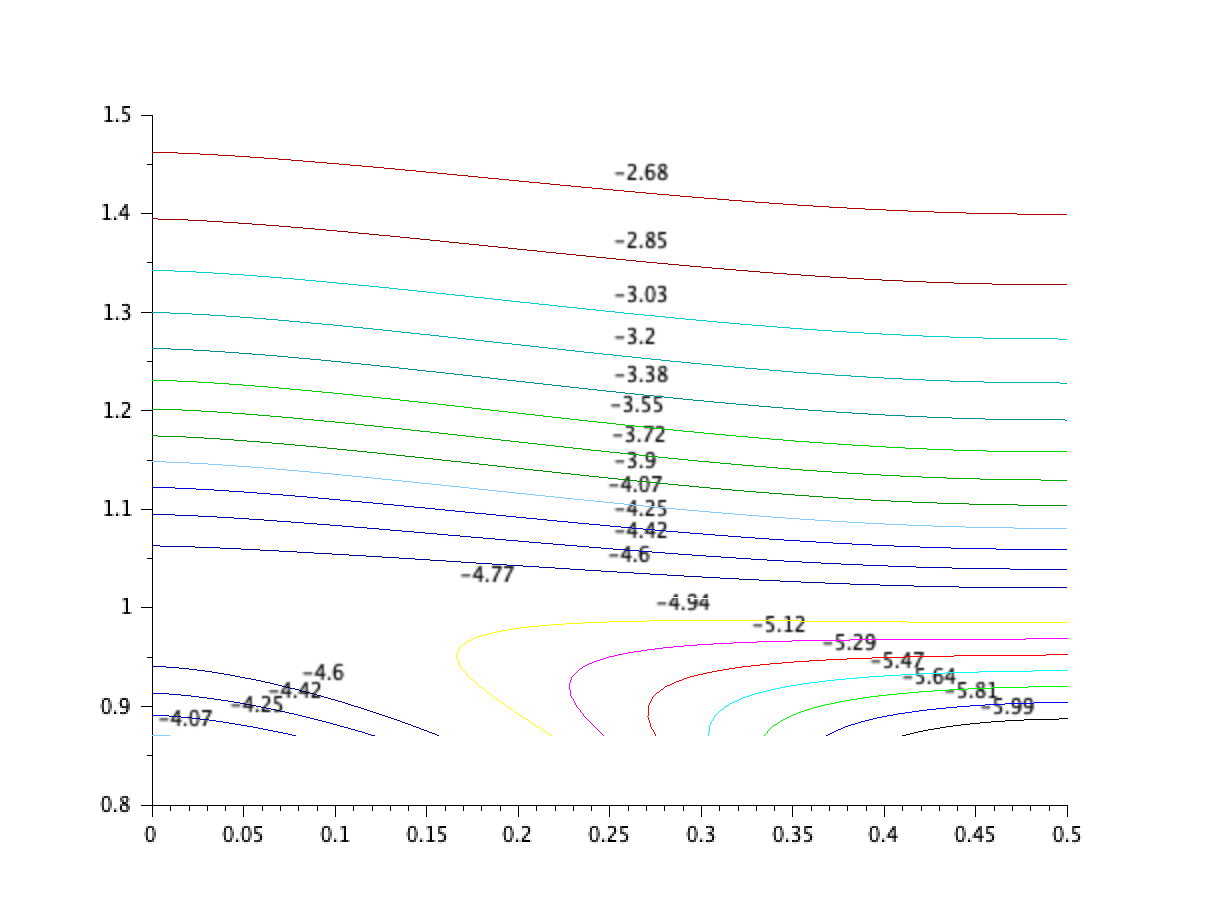}\quad \includegraphics[width=8cm]{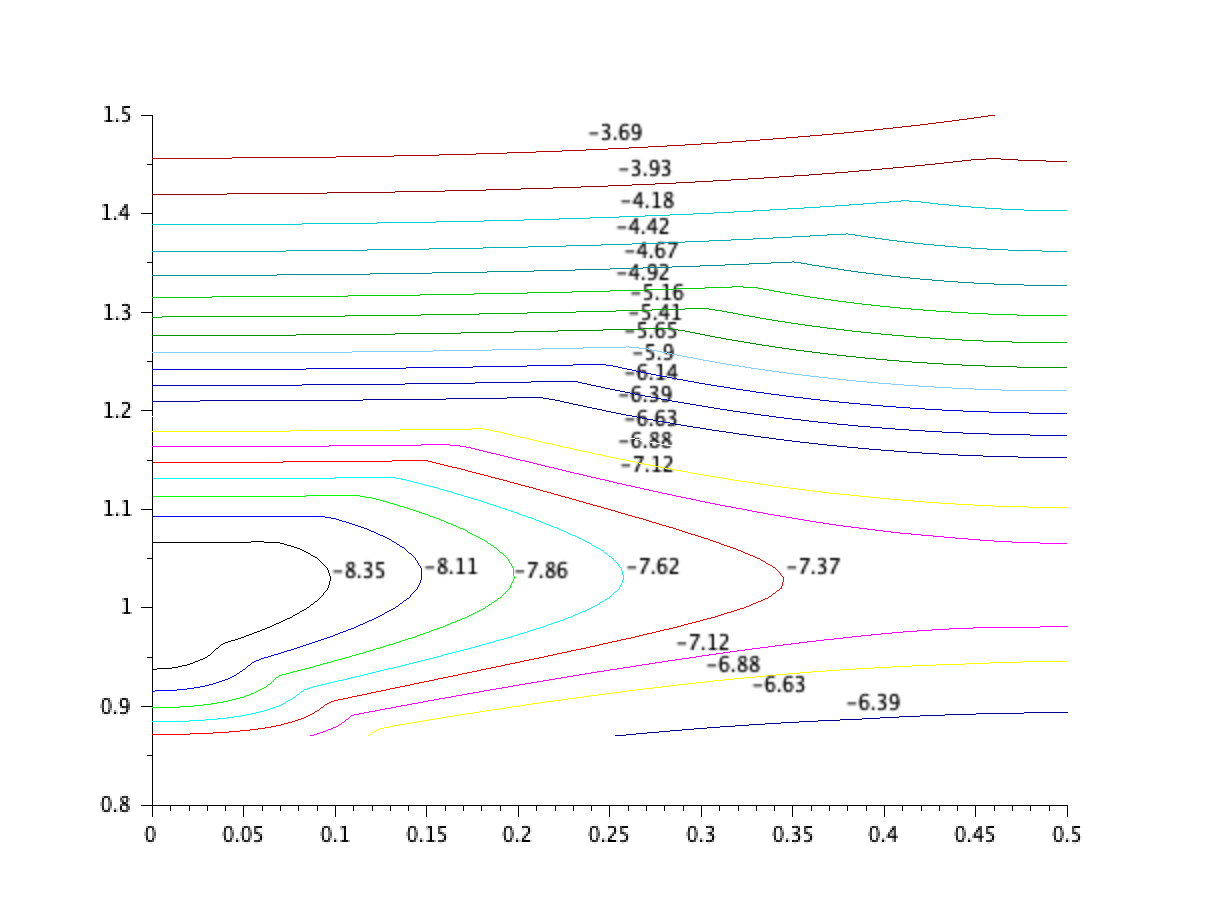}
\caption{Level curves of $L\mapsto \zeta_{L,\|\cdot\|_4}(6)$ (top left, with an optimal square lattice),  $L\mapsto \zeta_{L,\|\cdot\|_8}(6)$  (top right, with an optimal lattice parametrized by $\approx(1/2,1.098)$, $e_{\|\cdot\|_{1.5}}$ (bottom left, with an optimal triangular lattice) and $e_{\|\cdot\|_{\infty}}$ (bottom right, with an optimal square lattice) on a part of $\mathcal{D}$.}
\label{fig:contour}
\end{figure}

According to our simulations, we can write the following conjecture concerning the optimality of the triangular and the square lattice. We could of course write a more general conjecture from our numerical observations but since the shape of the other minimizers seems a bit vague, we prefer to only focus on $\mathsf{A}_2$ and $\Z^2$.

\begin{Conjecture}[\textbf{Minimality for Epstein zeta functions and Lennard-Jones energy}]\label{conj:lattice}
Let $s>2$. There exists $1<p_1<p_2$ and $1<p_3<p_4$ such that, up to rotation,
\begin{enumerate}
\item concerning $L\mapsto \zeta_{L,\|\cdot\|_p}(s)$ in $\mathcal{L}_2(1)$,
\begin{itemize}
\item[•] for all $p\in [1,2]$, the minimizer is $\sqrt{\frac{2}{\sqrt{3}}}\mathsf{A}_2$,
\item[•] for all $p\in (p_1,p_2)$, the minimizer is $\Z^2$.
\end{itemize}
\item concerning $L\mapsto \textnormal{E}_{\|\cdot\|_p}^{\textnormal{LJ}}(L)$ in $\mathcal{L}_2$,
\begin{itemize}
\item[•] for all $p\in (p_3,2]$, the minimizer is $\lambda_{\mathsf{A}_2} \mathsf{A}_2$,
\item[•] for all $p\in \{1\}\cup(p_4,\infty]$, the minimizer is $\lambda_{\Z^2} \Z^2$.
\end{itemize}
\end{enumerate}
\end{Conjecture}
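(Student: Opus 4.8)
The plan is to reduce everything, via Remark~\ref{rem:D} and Proposition~\ref{prop:LJ}, to a two–variable minimization on $\mathcal{D}$, and then to combine a symmetry/criticality analysis that locates the two candidate lattices, a Hessian computation that produces the local‑minimality windows, and a global step that is the real difficulty. Concretely, parametrize $L\in\mathcal{L}_2(1)$ by $(x,y)\in\mathcal{D}$ as in \eqref{eq:Lxy}, so that the Epstein part asks to minimize $(x,y)\mapsto\zeta_{L(x,y),\|\cdot\|_p}(s)=\sum_{(m,n)\neq(0,0)}\big\|\big(\tfrac{m+nx}{\sqrt y},\,n\sqrt y\big)\big\|_p^{-s}$, while by Proposition~\ref{prop:LJ} the Lennard–Jones part asks to minimize $e_{\|\cdot\|_p}(L(x,y))=-\zeta_{L,\|\cdot\|_p}^2(6)/\zeta_{L,\|\cdot\|_p}(12)$. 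Both functions are real‑analytic on the interior of $\mathcal{D}$; since $y\ge\sqrt3/2$ on $\mathcal{D}$ and $y\to+\infty$ shrinks the vector $(1/\sqrt y,0)$, elongated lattices send $\zeta_{L,\|\cdot\|_p}(s)$ to $+\infty$ and send $e_{\|\cdot\|_p}$ to a finite one‑dimensional limit which a finite computation shows exceeds the conjectured value, so the minimum is attained on a compact sub‑box $\{(x,y)\in\mathcal{D}:y\le Y_0\}$ and only its location is in question. Since $\|\cdot\|_p$ is not rotation invariant for $p\neq2$, \eqref{eq:Lxy} fixes one orientation per shape class; to match the ``up to rotation'' in the statement one adds the rotation angle as a third variable, but the residual dihedral symmetry of the pair $(\text{lattice},\|\cdot\|_p)$ leaves only finitely many critical orientations, so this only enlarges the casework.

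First I would record that both candidate lattices are always critical points. With one generator along the $x$‑axis, the triangular lattice $(\tfrac12,\tfrac{\sqrt3}{2})$ is invariant under reflection across the $x$‑axis and the square lattice $(0,1)$ under reflections across both axes; each of these is a symmetry of $\|\cdot\|_p$, so, combined with the integer‑shear equivalence of lattices, $\nabla\zeta_{L(x,y),\|\cdot\|_p}(s)=0$ at both points for every $p$ and $s$, and likewise for $e_{\|\cdot\|_p}$. Next I would compute the $2\times2$ Hessian at each point (an absolutely convergent lattice sum), controlling tails rigorously by comparison with Euclidean Epstein sums and interval arithmetic, and determine the set of $p$ on which it is positive definite: this is what produces the thresholds $p_1,p_2,p_3,p_4$, at least as endpoints of local‑minimality windows. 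For the anchor $p=2$ one invokes the minimality of $\mathsf{A}_2$ among $\mathcal{L}_2(1)$ for $\zeta_{L,\|\cdot\|_2}(s)$ for all $s>2$ \cite{Rankin,Cassels,Eno2,Diananda,Mont}, and deduces the Lennard–Jones statement at $p=2$ from Proposition~\ref{prop:LJ} together with the monotonicity argument of \cite{LBComputerLJ23}. A useful structural reduction to exploit along the way is the exact identity $\|v\|_1=\|Mv\|_\infty$ with $M=\left(\begin{smallmatrix}1&1\\1&-1\end{smallmatrix}\right)$: since $\det M=-2$, combined with the homogeneity \eqref{eq:zetahom} it gives $\zeta_{L,\|\cdot\|_1}(s)=2^{-s/2}\zeta_{L',\|\cdot\|_\infty}(s)$ and $e_{\|\cdot\|_1}(L)=e_{\|\cdot\|_\infty}(L')$ with $L'=\tfrac1{\sqrt2}ML$, so the $p=1$ and $p=\infty$ cases of the conjecture are literally equivalent (and consistent, since $\tfrac1{\sqrt2}M(\Z^2)$ is $\Z^2$ up to rotation), and only one of them needs to be treated.

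Then comes the global step, which I expect to be the main obstacle, and I would attack it by two complementary routes. The analytic route is a perturbation from $p=2$: establishing joint real‑analyticity in $(p,x,y)$, the strict nondegenerate global minimality of the triangular point at $p=2$ — quantified by a lower bound for the energy away from that point — propagates to an open interval of $p$ around $2$, yielding $(p_3,2]$ for Lennard–Jones and a neighbourhood of $2$ inside $[1,2]$ for the zeta function, with the $p=1$ end handled through the $p=1\leftrightarrow p=\infty$ equivalence above. The computer‑assisted route, in the spirit of \cite{LBComputerLJ23}, would instead cover $\mathcal{D}\cap\{y\le Y_0\}$ by finitely many boxes, bound $\zeta_{L,\|\cdot\|_p}(s)$ (resp. $e_{\|\cdot\|_p}$) from below on each box by a rigorously truncated lattice sum plus a remainder estimate, check that this lower bound strictly exceeds the value at the claimed minimizer except on a small neighbourhood of it, and close that neighbourhood with the Hessian bound of the previous paragraph. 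The obstruction is genuine: unlike the Euclidean case there is no modularity and no linear‑programming bound, and for intermediate $p$ the true minimizer is not even one of the two named lattices, so the continuation from $p=2$ a priori yields only small, non‑explicit windows, while the rigorous‑numerics bookkeeping grows both as $p$ leaves the neighbourhood of $2$ and as one approaches the thresholds $p_i$, precisely where the Hessian of the conjectured minimizer degenerates and where a second‑order argument alone no longer suffices.
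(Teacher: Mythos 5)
First, a point of order: in the paper this statement is a \emph{Conjecture}, supported only by Nelder--Mead runs and contour plots on the slice $\{L(x,y):(x,y)\in\mathcal{D}\}$; there is no proof to reproduce. Your text, by its own admission, does not supply one either: the reduction via \eqref{eq:Lxy} and Proposition \ref{prop:LJ}, the compactness argument, and the anchor at $p=2$ via \cite{Rankin,Cassels,Eno2,Diananda,Mont} are all sound, but they only set the stage. The continuation-from-$p=2$ route can at best yield minimality on some non-explicit neighbourhood of $p=2$, while the thresholds $p_1,\dots,p_4$ in the statement are \emph{global} bifurcation values at which the minimizing branch detaches from the named lattice; a Hessian computation at $\mathsf{A}_2$ or $\Z^2$ cannot locate them, and the computer-assisted covering that would is proposed but not carried out, and degenerates precisely at the $p_i$. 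So the essential step is missing, not merely deferred.

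Two concrete errors inside the sketch also need repair. (a) Your criticality claim for the triangular point is wrong for $p\neq 2$. Reflection across the $x$-axis, combined with the identifications $x\sim -x\sim x+1$, forces only $\partial_x=0$ along the lines $x=0$ and $x=\tfrac12$. At $(0,1)$ the missing derivative $\partial_y$ vanishes because the $\tfrac{\pi}{2}$ rotation swapping the generators of $\Z^2$ is an isometry of every $\|\cdot\|_p$; at $(\tfrac12,\tfrac{\sqrt3}{2})$ the analogous extra symmetry of $\mathsf{A}_2$ is the $\tfrac{\pi}{3}$ rotation, which is \emph{not} an isometry of $\|\cdot\|_p$ for $p\neq 2$. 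Hence the triangular point is in general not a critical point of the energy on shape space --- consistent with the paper's own numerics, where the minimizer drifts to $(\tfrac12,y_p)$ with $y_p>\tfrac{\sqrt3}{2}$ as soon as $p>2$ --- and its conjectured minimality on $[1,2]$ must be a corner phenomenon of $\mathcal{D}$ (recall $(\tfrac12)^2+(\tfrac{\sqrt3}{2})^2=1$), for which a positive-definite-Hessian criterion is the wrong local tool; moreover the orientation variable cannot be dropped there, since $\mathcal{D}$ only parametrizes shapes. (b) Your identity $\|v\|_1=\|Mv\|_\infty$ is correct and in fact stronger than you use: $\tfrac{1}{\sqrt2}M$ is orthogonal, so the minimizing \emph{shape classes} over all of $\mathcal{L}_2(1)$ for $p=1$ and $p=\infty$ coincide exactly. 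Combined with the first bullet of the conjecture at $p=1$, this forces the $p=\infty$ Epstein minimizer (over all orientations) to be a rotated triangular lattice, whereas the paper's slice-restricted numerics report a minimizer near $(\tfrac12,1.095)$ for large $p$. Before attacking the global step you must resolve this tension: either the fixed-orientation slice misses the rotated triangular lattice at $p=\infty$, or the $p=1$ endpoint of the first bullet is already false. As it stands, your argument neither proves the statement nor establishes that it is consistent.
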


\begin{remark}[\textbf{Open problems}]
Many other attractive-repulsive potentials can be taken into consideration: differences of Yukawa, Gaussians or Coulomb potentials as it was done very recently by Sun, Wei and Zou in \cite{SunWeiZou24} in the classical euclidean norm case. For all these potentials, two interesting questions emerge:
\begin{itemize}
\item what is the minimizer of their lattice energies in $\mathcal{L}_2$? One could guess that the minimizer follows again the above numerical observations. Without the homogeneity ``trick" explained above, this seems technically difficult to handle.
\item what are the minimizers at fixed arbitrary density of their lattice energies? Here, a phase diagram -- as the one described in \cite{Beterloc,SamajTravenecLJ,LuoWei22,SunWeiZou24} in the euclidean case -- could be done. This seems to be numerically accessible but technically challenging.
\item what is the expected Wulff shape in the limit of
a large number of particles for the Lennard-Jones
potential and an arbitrary norm? In the isotropic triangular lattice case, this question has been partially addressed in \cite{NinPetrache}. In the general $p$-norm case, one might speculate that the boundary of the Wulff shape would be hexagonal (resp. octagonal) for $\|\cdot\|_p$ when $p\in [p_3,2)$ (resp. for $p\in \{1\}\cup(p_4,\infty]$).
\end{itemize}
\end{remark}

\noindent \textbf{Acknowledgement:} The authors thank Universit\'e Claude Bernard Lyon 1 for the meeting opportunity given to us via the teaching unit ``Travaux d'initiative personnelle encadrés" during spring semester 2024. Parts of this paper were already included in Camille Furlanetto's work \cite{FurlanettoTIPE} (in French). We also want to thank the anonymous referee for their valuable comments.

{\small \bibliographystyle{plain}

}

\end{document}